\newtheorem{theorem}{Theorem}
\newtheorem{example}{Example}
\newtheorem{lemma}{Lemma}
\newtheorem{remark}{Remark}
\newtheorem{assumption}{Assumption}
\newtheorem{problem}{Problem}
\newcommand\ignore[1]{{}}
\algrenewcommand\algorithmicrequire{\textbf{Input:}}
\algrenewcommand\algorithmicensure{\textbf{Output:}}
\DeclareMathOperator{\diag}{diag}
\begin{document}
\author{Cong Li$^{\dagger}$,~\IEEEmembership{}
        Fangzhou Liu{$^{\dagger}$$^*$}~\IEEEmembership{Member,~IEEE},
        Yongchao Wang,~\IEEEmembership{}
        and~Martin Buss~\IEEEmembership{Fellow,~IEEE} 
\thanks{$^{\dagger}$ Common first authors; $^*$ Corresponding author}
\thanks{This work was supported in part by the National Postdoctoral Program of China under Grants (GZC20233538).}
\thanks{C. Li is with the College of Intelligence Science and Technology, National University of Defense Technology, Changsha 410073, China. e-mail: lc@nudt.edu.cn.}
\thanks{Y. Wang, and M.Buss are with the Chair of Automatic Control Engineering, Technical University of Munich, Theresienstr. 90, 80333, Munich, Germany. e-mail: \{yongchao.wang, mb\}@tum.de.}
\thanks{F. Liu is with the Research Institute of Intelligent Control and Systems, Harbin Institute of Technology, 150001,  Harbin, China, e-mail: fangzhou.liu@hit.edu.cn.}
\newline
}
\title{Data Informed Residual Reinforcement Learning for High-Dimensional Robotic Tracking Control}
\maketitle
\begin{abstract}
The learning inefficiency of reinforcement learning (RL) from scratch hinders its practical application towards continuous robotic tracking control, especially for high-dimensional robots. This work proposes a data-informed residual reinforcement learning (DR-RL) based robotic tracking control scheme applicable to robots with high dimensionality. The proposed DR-RL methodology outperforms common RL methods regarding sample efficiency and scalability. Specifically, we first decouple the original robot into low-dimensional robotic subsystems; and further utilize one-step backward (OSBK) data to construct incremental subsystems that are equivalent model-free representations of the above decoupled robotic subsystems. The formulated incremental subsystems allow for parallel learning to relieve computation load and offer us mathematical descriptions of robotic movements for conducting theoretical analysis. Then, we apply DR-RL to learn the tracking control policy, a combination of incremental base policy and incremental residual policy, under a parallel learning architecture. The incremental residual policy uses the guidance from the incremental base policy as the learning initialization and further learns from interactions with environments to endow the tracking control policy with adaptability towards dynamically changing environments. Our proposed DR-RL based tracking control scheme is developed with rigorous theoretical analysis of system stability and weight convergence. The effectiveness of our proposed method is validated numerically on a 7-DoF KUKA iiwa robot manipulator and experimentally on a 3-DoF robot manipulator that would fail for other counterpart RL methods.
\end{abstract}
\begin{IEEEkeywords}
Residual reinforcement learning, 
parallel learning, 
robotic tracking control.
\end{IEEEkeywords}
\IEEEpeerreviewmaketitle

\section{Introduction}
Reinforcement learning (RL)  holds the promise of autonomous learning of continuous robotic control policies adaptable to varying environmental conditions \cite{sutton2018reinforcement}.
However, RL from scratch requires an extensive amount of training data (interactions with surrounding environments) before learning one satisfying policy \cite{kim2021dynamic,ma2023position}. 
This sample inefficiency inhibits RL from robotic practical applications.
For example, large amounts of physical interactions between the robot and the real-world environments would cause undesirable mechanical wear and even damage to the robot itself and the environment.
The high dimensionality of robots would exacerbate the sample inefficiency problem mentioned above.
Thereby, this work leverages the mechanism of model-based RL (MBRL) and residual RL (RRL) to facilitate the data-efficient training of continuous robotic control policies.
Furthermore, the training is conducted in a parallel learning architecture, wherein the required computational load is distributed into multiple processors to relieve the computation complexity of high-dimensional robotic control tasks.
\subsection{Related Works}
The sample inefficiency causes the inapplicability of RL from scratch to robots given samples are often expensive to get. 
Reducing sample complexity motivates the advancement of MBRL and RRL fields.
The common MBRL improves sample efficiency by firstly learning a latent model of environment dynamics and then using the learned latent model to simulate experience for policy learning \cite{sutton1991dyna,polydoros2017survey,eysenbach2022mismatched}. 
Although the latent models learned via computation-intensive parametric \cite{janner2019trust,yao2022model} or nonparametric \cite{deisenroth2013gaussian,deisenroth2011pilco} methods contribute to reduced amounts of environmental samples, the model learning process introduces additional computation complexity.
Besides, the learned latent models are often black-box input-out mappings that inhibit designers from conducting further theoretical analysis. 
How to efficiently learn a control-oriented latent model favoring both computation simplicity and theoretical analysis remains an open problem in the MBRL field.
This motivates us to utilize the so-called one-step backward (OSBK) data to inform explicit latent models in simple incremental mathematical forms that would facilitate the theoretical analysis and ease the model learning difficulty.
From a different perspective, RRL enhances data efficiency by learning upon one base policy and further optimizing performance via the residual policy trained by RL, rather than learning the solution from the very beginning \cite{johannink2019residual}.
The guidance from the base policy constrains the search space of the residual policy. This improves the exploration efficiency of the residual policy learning process.
Thereby, the amount of training data and the consumed time is substantially reduced before a satisfying policy is learned.
The current RRL-related works \cite{johannink2019residual,silver2018residual,alakuijala2021residual,kulkarni2022learning,rana2023residual,zhang2022residual} are featured with task generality; however, the theoretical completeness remains to be further investigated.
Therefore, this work designs the residual policy part from a control-theoretical RL perspective in favor of theoretical analysis.

The control-theoretical RL, namely approximate dynamic programming (ADP), is one RL branch featured with available theoretical analysis, 
which serves as one promising candidate that contributes to RRL with theoretical completeness.
Although applicable for addressing the robotic (sub)optimal tracking control problem, ADP-based approaches \cite{modares2014optimal,kamalapurkar2015approximate,zhang2011data} suffer learning inefficiency regarding different desired trajectories.
In particular, the tracking control policy trained on one specific reference trajectory dynamics cannot efficiently track the unaccounted reference signals. 
However, reference signals are often described by different trajectory dynamics in complex tasks.
Thereby, the associated training process would repeatedly restart but might not satisfy the required tracking performance in each learning period; This is unfavorable to practical real-time applications.
For example, in a dynamic environment populated with moving obstacles, a robot keeps on replanning to generate safe desired trajectories, which are accounted for by multiple different trajectory dynamics. 
Thereby, the controller training processes in the works above \cite{kamalapurkar2015approximate,zhang2011data,modares2014optimal} would repeatedly restart as replanning happens. 
However, the tracking performance during each limited training period is usually not satisfying for practical applications, especially considering safety issues demanding perfect tracking precision.
This work utilizes the mechanism of control-theoretical RL to design the residual policy with a preference for theoretical completeness; 
While an assumed reference trajectory dynamics is avoided in the learning process. 
Thereby, the flexibility of our proposed tracking control scheme is extended.

In view of dimensionality, the works \cite{kamalapurkar2015approximate,zhang2011data,modares2014optimal} discussed above are mostly restricted to low-dimensional domains due to learning inefficiency.
This problem becomes even worse when referring to the robotic control with high dimensionality.
A common approach in MBRL and RRL fields to solving high-dimensional control tasks is using the powerful approximation ability of deep neural networks (DNNs) to learn the associated policies and/or latent models \cite{janner2019trust,plaat2020deep}.
However, extensive amounts of samples are still required for training DNNs, which would negate the benefit of sample efficiency brought by MBRL and RRL. 
The control-theoretical RL methods \cite{kamalapurkar2015approximate,zhang2011data,modares2014optimal} also suffer limited scalability towards high-dimensional robotic control tasks.
The obstacle lies in the so-called curse of complexity. 
Specifically, the required number of activation functions to gain a sufficiently accurate approximation of a value function grows exponentially with the system dimension \cite{kamalapurkar2016efficient}.
Even though a suitable large set of activation functions and appropriate hyperparameters are found through tedious engineering efforts, the accompanying computation load would 
degrade the realtime performance of the associated weight update law and the learned policy \cite{zhao2020robust,yang2022adaptive}. 
Thus, experimental validations of control-theoretical RL-based (sub)optimal tracking control policy on a high-dimensional robot are seldom found in existing works.
This work relieves the high sample complexity and computation load involved in high-dimensional robotic control tasks via a parallel learning architecture.
In particular, parallel learners learn solutions to decomposed sub-problems independently while working toward a common goal.

\subsection{Contributions}
The contributions of our work are summarized as follows.
\begin{itemize}
\item 
A data-efficient and scalable DR-RL based robotic tracking control scheme is proposed to be applicable to high-dimensional robots, 
which succeeds in experimental tasks where common RL methods are intractable.
\item The formulated data-informed incremental subsystems 
 offer latent models for the learning process to improve sample efficiency;
 present a mathematical description of the robotic movement for theoretical analysis;
 and allow for the application of parallel learning architecture to relieve computation complexity.
\item 
The proofs of system stability and weight convergence are provided on the basis of the formulated incremental subsystems and the parallel learners utilized off-policy experience data.
\end{itemize}

The organization of this paper is as follows. 
Section \ref{section problem formulation} presents the problem formulation. 
Then, the development of incremental subsystems is shown in Section \ref{sec incremental subsystem}. 
Thereafter, Section \ref{section optimal tracking control} presents the mechanism of the proposed robotic tracking control scheme.
Section \ref{sec Approximate solutions} elucidates the approximate solution learned in parallel. 
The developed robotic tracking control policy is numerically and experimentally validated in Section \ref{sec simulation} and Section \ref{section Experimental validation}, respectively.
Finally, the conclusion is drawn in Section \ref{section conclusion}.

\emph{Notations:} Throughout this paper, 
$\mathbb{R}$ denotes the set of real numbers; 
$\mathbb{R}^{n}$ is the Euclidean space of $n$-dimensional real vector; 
$\mathbb{R}^{n \times m}$ is the Euclidean space of $n \times m$ real matrices; 
$\left\|\cdot\right\|$ represents the Euclidean norm for vectors and induced norm for matrices;
The pseudo-inverse of the full column rank $D$ is denoted as $D^{\dagger} :=(D^{\top}D)^{-1}D^{\top} \in \mathbb{R}^{m \times n}$;
$\diag {[x]}$ is the $n \times n$ diagonal matrix with the $i$th diagonal entry equals $x_i$.
For notational brevity, time dependence is suppressed without causing ambiguity. 

\section{Problem Formulation} \label{section problem formulation}
The robotic movement is assumed to be described by the control-affine nonlinear system:
\begin{equation} \label{eq general sys}
\dot{x} = f(x)+g(x)u(x),
\end{equation}
where $x \in \mathbb{R}^{n}$, $u(x) : \mathbb{R}^{n} \to \mathbb{R}^{m}$ are the system state and control input, respectively. 
Both $f(x) : \mathbb{R}^{n} \to \mathbb{R}^{n}$ and
$g(x) : \mathbb{R}^{n} \to \mathbb{R}^{n \times m}$ are bounded and locally Lipschitz.
Assume that the explicit knowledge of $f(x)$ and $g(x)$ is unknown and $rank(g) = n$ holds for the robot \eqref{eq general sys}.

This work focuses on the high-dimensional robotic tracking control task presented in Problem \ref{Problem 1}.
\begin{problem} \label{Problem 1}
Given a desired trajectory $x_d \in \mathbb{R}^{n}$,
learning an efficient tracking control policy $u(x)$ applicable to the high-dimensional robot \eqref{eq general sys}.
\end{problem}
The high-uncertainty property of Problem \ref{Problem 1} encourages us to use RL based approaches. 
However, the high-dimensionality property of Problem \ref{Problem 1}
invalidates the basic RL from scratch approaches with limited scalability. 
In the following sections, we clarify our scalable and efficient RL methodology to solve Problem \ref{Problem 1}.

\section{Data Informed Incremental Subsystem} \label{sec incremental subsystem}
This section benefits from the decoupling technique and the OSBK data \cite{hsia1989new} to develop incremental subsystems. 
The formulated incremental subsystems jointly describe the movement of the original robot \eqref{eq general sys} without using explicit model information.
Specifically, the high-dimensional robot is first decoupled into multiple low-dimensional subsystems in Section \ref{sec Decoupled Subsystem}. 
Then, the OSBK data is utilized to estimate the unknown model information (including coupling terms between subsystems) for constructing the incremental subsystems in Section \ref{sec devp incre sub}.

\subsection{Decoupled Subsystem} \label{sec Decoupled Subsystem}
The high-dimensional robot \eqref{eq general sys} is supposed to be decoupled into $N \in \mathbb{R}^{+}$ subsystems, wherein the $i$th subsystem follows
\begin{equation} \label{i-th state space system}
    \dot{x}_i = f_i + g_{i} u_i, \ \  i = 1,2,\cdots,N,
\end{equation} 
where $x_i \in \mathbb{R}^{n_{i}}$, 
    $u_i \in \mathbb{R}^{m_{i}}$ are the  local state and control input of the $i$th subsystem;
$f_i \in \mathbb{R}^{n_{i}}$ is a combination of the local internal dynamics and  the coupling  terms of the $i$th subsystem.
$g_i \in \mathbb{R}^{n_i \times m_i}$ is the local  input gain matrix.
Note that both $f_i$ and $g_i$ are state-dependent matrices.
Besides, $n = \sum_{i}^{N} n_i$ and  $m = \sum_{i}^{N} m_i$ hold.

For a better explanation, we focus on the robot manipulator case in Example \ref{example 1} to show the transformation from the robot \eqref{eq general sys} into its associated decoupled subsystems \eqref{i-th state space system}.

\begin{example} \label{example 1}
 The robot manipulator could be described by the Euler-Lagrange (E-L) equation \cite{li2021concurrent}:
\begin{equation} \label{example E-L}
     M(q)\Ddot{q} + N(q,\dot{q})+F(\dot{q})=\tau, 
\end{equation}
where $M(q): \mathbb{R}^{n_r}\to  \mathbb{R}^{n_r \times n_r}$ is the symmetric positive definite inertia matrix; 
$N(q,\dot{q}) := C(q,\dot{q})\dot{q}+G(q): \mathbb{R}^{n_r} \times \mathbb{R}^{n_r} \to \mathbb{R}^{n_r}$, $C(q,\dot{q}):  \mathbb{R}^{n_r} \times\mathbb{R}^{n_r} \to \mathbb{R}^{n_r \times n_r}$ is the matrix of centrifugal and Coriolis terms, $G(q): \mathbb{R}^{n_r} \to \mathbb{R}^{n_r}$ represents the gravitational terms; $F(\dot{q}): \mathbb{R}^{n_r}\to \mathbb{R}^{n_r}$ denotes the viscous friction; $q$, $\dot{q}$, $\Ddot{q}\in \mathbb{R}^{n_r}$ are the vectors of angles, velocities, and accelerations, respectively; $\tau \in \mathbb{R}^{m_r}$ represents the input torque vector.
A fully actuated robot manipulator is considered here, thus $n_r = m_r$.

The high-dimensional robot \eqref{example E-L} could be decoupled into $n_r$ subsystems, wherein the $i$th subsystem reads
\begin{equation} \label{i-th E-L}
     M_{ii}\Ddot{q}_i + \sum_{j=1,j\ne i}^{n} M_{ij}\Ddot{q}_{j}+N_i+F_i = \tau_i, \ \  i = 1,2,\cdots,n_r,
\end{equation}
where $M_{ij}$ denotes the $ij$th entry of the matrix $M$, and $N_{i}$ ($F_{i}$)  is the $i$th entry of the  vector $N$ ($F$). 

Let $x_{i_{r_{1}}} := q_i \in \mathbb{R}$, 
$x_{i_{r_{2}}} :=\dot{q}_i \in \mathbb{R}$, 
$f_{i_{r}} := - (\sum_{j=1,j\ne i}^{n} M_{ij}\Ddot{q}_{j}+N_i+F_i)/M_{ii} \in \mathbb{R}$, 
and $g_{i_{r}} :=  1/M_{ii}  \in \mathbb{R}$, 
we rewrite \eqref{i-th E-L} as
\begin{subequations} \label{example i-th state space system}
     \begin{align}
     & \dot{x}_{i_{r_{1}}} = x_{i_{r_{2}}}, \label{example i-th state space system 1}\\
     & \dot{x}_{i_{r_{2}}} = f_{i_{r}} + g_{i_{r}}  \tau_i  \label{example i-th state space system 2},\ \  i = 1,2,\cdots,n_r
     \end{align}
\end{subequations}
\end{example}
\begin{remark}
The decoupled subsystems \eqref{i-th state space system} are beneficial to alleviate the computation complexity induced by the high dimensionality property of Problem \ref{Problem 1}.
This is because the decoupled subsystems allow for the parallel learning architecture in Section \ref{sec Approximate solutions}, wherein the required intensive computational load for solving Problem \ref{Problem 1} is distributed into multiple processors.
\end{remark}
\subsection{Incremental Subsystem} \label{sec devp incre sub}
This subsection exploits the OSBK data to estimate the unknown model knowledge $f_i$ and $g_{i}$ in \eqref{i-th state space system}. 
This departs from common methods that identify the unknown $f_i$, $g_{i}$ explicitly through a tedious identification process \cite{kamalapurkar2016modeltracking,liu2021high,beckers2017stable, na2017adaptive}.

To facilitate estimation, we first introduce a predetermined constant matrix $\bar{g}_{i}  \in \mathbb{R}^{n_i \times m_i}$ and multiply $\bar{g}^{\dagger}_{i}$ on \eqref{i-th state space system},
\begin{equation} \label{incrmental system 1}
\begin{aligned}
\bar{g}^{\dagger}_{i} \dot{x}_{i} = h_i + u_i,
\end{aligned}
\end{equation}
where $h_i := (\bar{g}^{\dagger}_{i} - g^{\dagger}_{i})\dot{x}_{i} + g^{\dagger}_{i} f_i \in \mathbb{R}^{m_{i}}$ is a lumped term that embodies all of the unknown model knowledge in \eqref{i-th state space system}.

Then, with a sufficiently high sampling rate  
\footnote{The so-called  sufficiently high sampling rate, which is a prerequisite for estimating the unknown $h_i$ by reusing past measurements of states and control inputs, can be chosen as the value that is larger than $30$ times the system bandwidth \cite{li2021model}.} 
\cite{youcef1992input,li2021model}, we estimate the unknown $h_i$ as 
\begin{equation} \label{incrmental system 2}
\begin{aligned}
 \hat{h}_i   = h_{i,0}  = \bar{g}^{\dagger}_{i} \dot{x}_{i_{0}}-u_{i,0},
\end{aligned}
\end{equation}
utilising the OSBK data $\dot{x}_{i,{0}} = \dot{x}_{i}(t-t_s)$,  $u_{i,0} =u_i(t-t_s)$, where $t_s \in \mathbb{R}^{+}$ is the sampling time.

Substituting \eqref{incrmental system 2} into \eqref{incrmental system 1}, we finally obtain the $i$th incremental subsystem
\begin{equation} \label{incrmental system}
    \dot{x}_{i} =\dot{x}_{i,0} + \bar{g}_i (\Delta u_i+\xi_{i}),
\end{equation} 
where  $\Delta u_i := u_i - u_{i,0} \in \mathbb{R}^{m_{i}}$ is the incremental policy;
and $\xi_{i} := h_i- \hat{h}_i \in \mathbb{R}^{m_{i}}$ denotes the estimation error proved to be bounded in Section \ref{section optimal tracking control} \emph{Lemma \ref{boud of TDE AE}}.

The above formulated incremental subsystem \eqref{incrmental system} is an equivalent of the $i$th subsystem \eqref{i-th state space system}; however, no explicit model information is required.
\begin{remark}
The OSBK data ($\dot{x}_{i,0}$ and $u_{i,0}$ in particular) informed  \eqref{incrmental system} offers us with one model-free representation of the subsystem \eqref{i-th state space system}.
The resulting incremental subsystem benefits both efficient, parallel implementation and rigorous theoretical analysis. 
In particular, the informed incremental subsystem allows us to conduct the value function learning process in Section \ref{sec Approximate solutions} following the MBRL mechanism in a parallel manner. 
Thereby, the learning efficiency is substantially improved. 
Furthermore, the combination of the incremental subsystems offers a mathematical form describing the robotic movement, which permits us to use the tool from the control field to conduct the rigorous theoretical analysis.
\end{remark} 

Through the aforementioned analysis \eqref{i-th state space system}-\eqref{incrmental system}, we could decompose the robotic tracking task in Problem \ref{Problem 1} into sub-tasks regarding the incremental subsystems \eqref{incrmental system}, as clarified in Problem \ref{Problem 2}.
\begin{problem} \label{Problem 2}
    For the incremental subsystem \eqref{incrmental system}, learning the incremental policy $\Delta u_i$ that 
    drives the subsystem to track its associated desired trajectory precisely.
\end{problem}
In the following section, we focus on Problem \ref{Problem 2} to present our proposed tracking control scheme.

\section{DR-RL Based Robotic Tracking Control Scheme} \label{section optimal tracking control}
This section details our proposed DR-RL based robotic tracking control scheme (see Figure.~\ref{algorithm framework}), wherein the incremental policies for solving Problem \ref{Problem 2} are learned in parallel.

Under a parallel learning architecture, the learning process follows the RRL mechanism, wherein the incremental policy
\begin{equation} \label{incremental control}
\Delta u_{i} = \Delta u_{i_{b}}+\Delta u_{i_{r}},
\end{equation}
is an addition of the incremental base policy  $\Delta u_{i_{b}} \in \mathbb{R}^{m_{i}}$ 
and the incremental residual policy $\Delta u_{i_{r}} \in \mathbb{R}^{m_{i}}$. 
The detailed procedures to design $\Delta u_{i_{b}}$, $\Delta u_{i_{r}}$ and also their roles are later clarified in Section \ref{sec base policy} and Section \ref{sec residual control}, respectively.

Let $e_i := x_i - x_{d_{i}} \in \mathbb{R}^{n_{i}}$ denote the local tracking error, where $x_{d_{i}} \in \mathbb{R}^{n_{i}}$ denotes the local desired trajectory of the $i$th subsystem. 
Combining with \eqref{incrmental system} and \eqref{incremental control}, we would get the incremental error dynamics
\begin{equation} \label{eq err dyn}
    \dot{e}_i =  \dot{x}_{i,0} + \bar{g}_{i} (\Delta u_{i_{b}} + \Delta u_{i_{r}} + \xi_{i})-\dot{x}_{d_{i}}.
\end{equation}
In the remaining part of this section, we will focus on \eqref{eq err dyn} to clarify the designed incremental base and residual policies that jointly enforce the local tracking error $e_i$ to zero.
\begin{figure*}[!t]
\centering
\includegraphics[width=7in]{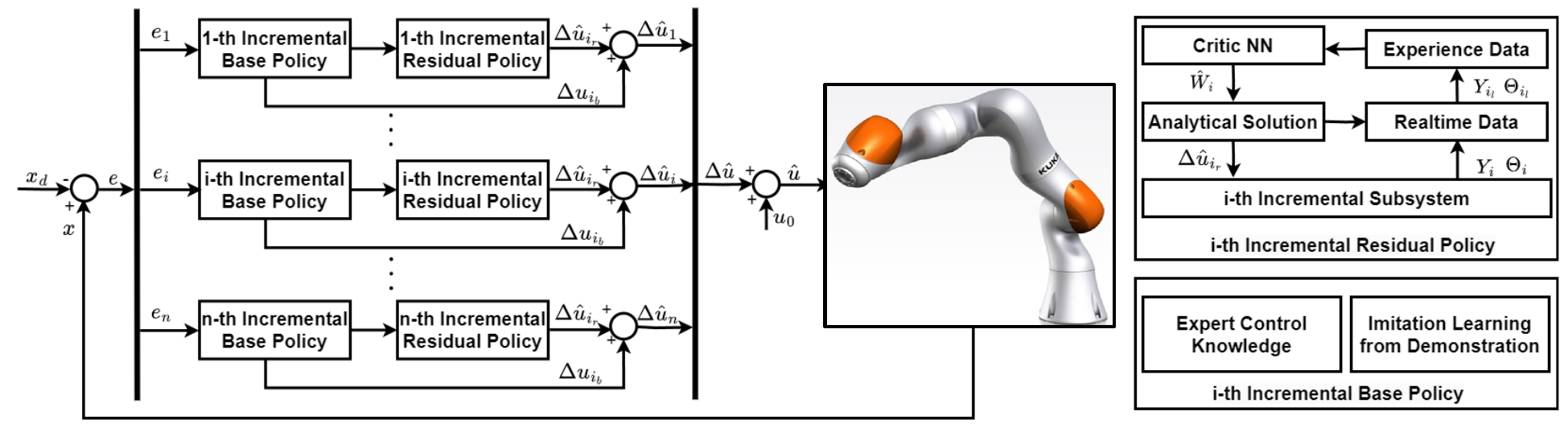} 
\caption{Schematic of the DR-RL based robotic tracking control policy.
The original high-dimensional robotic tracking task is decoupled into subtasks of incremental subsystems for efficient parallel implementation. 
The incremental policies, a combination of the incremental base policy and the incremental residual policy, are learned in parallel to solve the associated subtasks. 
The incremental base policy provides a policy initialization for the subsequent incremental residual policy learning process.
}
\label{algorithm framework}
\end{figure*}

\subsection{Incremental Base Policy} \label{sec base policy}
This subsection details the design of the incremental base policy that would offer guidance for the incremental residual policy learning process clarified in Section \ref{sec residual control}.
The guidance would simplify the exploration.
Thus, the learning difficulty of the incremental residual policy for complex and longer-horizon tasks is decreased.

Practitioners could use existing knowledge from either control or learning fields to design the incremental base policy, which is presented in detail in the following.
\subsubsection{Control Perspective}
The incremental base policy could be implemented as fine-tuned feedforward or feedback controllers such as proportional-integral-derivative, impedance control, or nonlinear dynamic inversion (NDI).
In this case, the existing control knowledge is utilized to guide the learning process 
and offers us avenues to conduct further theoretical analysis, as illustrated later in Theorem \ref{main Theorem}.
The following uses Example \ref{example guidance 2} to clarify how the NDI control technique contributes to constructing the incremental base policy and its accompanying benefits regarding the theoretical analysis.
\begin{example} \label{example guidance 2}
This example adopts the incremental base policy designed as 
\begin{equation} \label{eq exp base}
    \Delta u_{i_{b}} = \bar{g}^{\dagger}_i(\dot{x}_{d_{i}}-\dot{x}_{i,0}-k_i e_i),
\end{equation}
where $k_i  \in \mathbb{R}^{n_i \times n_i}$ is a predetermined constant matrix. 

The application of $\Delta u_{i_{b}}$ \eqref{eq exp base} on \eqref{eq err dyn} yields the incremental error dynamics
\begin{equation} \label{eq exp error}
        \dot{e}_i =  -k_i e_i + \bar{g}_{i} (\Delta u_{i_{r}} + \xi_{i}),
\end{equation}
that could be further stabilized by the incremental residual  policy $\Delta u_{i_{r}}$.
After that, focusing on \eqref{eq exp error}, the incremental residual policy $\Delta u_{i_{r}}$ would learn upon the incremental base policy \eqref{eq exp base} to further minimize the tracking error $e_i$.
This improves the tracking precision and the robustness of the incremental policy \eqref{incremental control}.
The explicit form of the incremental error dynamics presented in \eqref{eq exp error} offers us avenues to conduct theoretical analysis using the Lyapunov tool,  as illustrated later in Theorem \ref{main Theorem}.
\end{example}
Although the theoretical analysis on the basis of \eqref{eq exp error} is possible, practitioners should be aware of the expert knowledge and the engineering effort (debugging $k_i$) involved.
Besides, the fine-tuned traditional controllers for one specific task often lack generalization towards other different tasks. 

\subsubsection{Learning Perspective}
Alternatively, here the incremental base policy is implemented as a policy from imitation learning. 
In this case, the reasoning ability of human demonstrations is embedded into the incremental base policy that would guide the incremental residual policy to complete complex tasks.
In the following, Example \ref{example guidance 3} is provided to exemplify the incremental base policy constructed from expert demonstrations.

\begin{example} \label{example guidance 3}
This example uses behavioral cloning, one simple kind of imitation learning, to design the incremental base policy for an explanation.
For one certain task, assume that the demonstration dataset 
$ \mathcal{D} := \{ (s_j, a_j) | j = 1,2, \cdots \}$ is available, where $s_j$, $a_j$ are states and actions in suitable dimensions.
The base policy $u_{b} (\theta)$ parameterized by $\theta$ is learned by solving the following optimization problem:
\begin{equation} \label{eq exp BC}
  \theta^* = \arg  \min_{\theta} \sum_{(s_j,a_j) \sim \mathcal{D}}^{} \left\| a_j - u_{b}(\theta) \right\|^2.
\end{equation}
The required incremental base policy $\Delta  u_{i_{b}} $ for our work is gotten via the computation $\Delta  u_{i_{b}}  (k t_s) = u_{i_{b}}  \left( (k+1) t_s,  \theta^* \right) - u_{i_{b}} \left(k t_s,  \theta^* \right)$, where $k \in \mathbb{R}^+$.
\end{example}

The data-driven methods from the learning field are suitable for hard-to-specify tasks (challenging to be specified explicitly using rules or constraints) that are inefficient or even intractable for conventional controllers; especially in the case of available cheap (easily gotten) data. However, it is difficult to offer theoretical analysis given the non-interpretability of the utilized incremental base policy.

\begin{remark}
The incremental base policy designed from different techniques either in control or learning fields implies the modularity of our proposed tracking control scheme.
The property of the investigated problem, the available source (expert knowledge or data in particular), and the designers' preference (theoretical guarantee or task generalization) jointly determine the explicit method used to design the incremental base policy.
\end{remark}
Whether the incremental base policy is designed from control or learning perspectives, its adaptation ability towards dynamically changing environments is limited because the incremental base policy is often designed in static environments. 
The performance of the incremental base policy has already been determined by pre-collected data or prior-set controller parameters.
This motivates us to conduct further learning on the basis of the incremental base policy to improve performance in a dynamically changing environment.

\subsection{Incremental Residual Policy} \label{sec residual control}
This section utilizes the control-theoretical RL to develop the incremental residual policy that learns upon the incremental base policy to get improved task performance and robustness.
In particular, we use the incremental base policy in Section \ref{sec base policy} to initialize the incremental residual policy learning process.
Then, the incremental residual policy learns adaptions to the incremental base policy in an optimization process, where tracking errors and energy consumption are minimized for improved performance.
The incremental residual policy learned from the interactions with environments endows the resulting incremental policy \eqref{incremental control} with enhanced robustness towards dynamically changing environments.

For convenience, we represent the incremental error dynamics \eqref{eq err dyn} as 
\begin{equation} \label{eq error dyn 2 part new}
  \dot{e}_i =   \bar{f}_i + \bar{g}_{i} \Delta u_{i_{r}} +\bar{g}_i \xi_{i}, 
\end{equation}
where $\bar{f}_i := \dot{x}_{i,0} + \bar{g}_{i} \Delta u_{i_{b}}- \dot{x}_{d_{i}} \in \mathbb{R}^{n_i}$.
The explicit form of $\Delta u_{i_{b}}$ has been determined in Section \ref{sec base policy}.

In the following, we will learn $\Delta u_{i_{r}}$ 
to track the desired trajectory precisely (stabilizing the incremental error dynamics \eqref{eq error dyn 2 part new} interpreting from a control perspective).

Given $\xi_{i}$ in \eqref{eq error dyn 2 part new} is unknown, thus the available incremental error dynamics for later analysis follows
\begin{equation} \label{incrmental err system new}
    \dot{e}_{i} =\bar{f}_i + \bar{g}_i \Delta u_{i_{r}}.
\end{equation}

The following value function 
\begin{equation}\label{cost fuction}
V_{i}(t) = \int_{t}^{\infty} r_{i}(e_i(\nu),\Delta u_{i_{r}}(\nu))\,d\nu,
\end{equation}
is considered for the incremental residual policy learning process to enhance performance, 
where $r_i(e_i,\Delta u_{i_{r}}) := e^{\top}_i Q_i e_i + W_{i}(\Delta u_{i_{r}}) + \Bar{\xi}^2_{oi}$. 
The quadratic term $e^{\top}_i Q_i e_i$, where $Q_i \in \mathbb{R}^{n_i \times n_i}$ is a positive definite matrix, is introduced to improve the tracking precision.
The input penalty function $W_{i}(\Delta u_{i_{r}})$
follows
\begin{equation}\label{control penalty function}
    \mathcal{W}_i(\Delta u_{i_{r}}) =  2\int_{0}^{\Delta u_{i_{r}}}  \beta  \tanh^{-1}(\vartheta / \beta)  \,d\vartheta,
\end{equation}
which is utilized to punish and enforce the incremental residual policy as $\left\|\Delta u_{i_{r}}\right\| \leq \beta \in \mathbb{R}^+$.
The limited $\Delta u_{i_{r}}$ is beneficial since a severe interruption might lead to an abrupt change 
of $\Delta u_{i_{r}}$, which might destabilize the learning process introduced in Section \ref{sec Approximate solutions}.
The estimation error related term follows $\Bar{\xi}_{oi} = \bar{c}_i  \left\|\Delta u_{i_{r}}\right\|$, where $\bar{c}_i \in \mathbb{R}^+$ is chosen as illustrated in Theorem \ref{theorem equalivance}. 
Note that $\Bar{\xi}_{oi}$ is introduced to account for the influence of the estimation error $\xi_{i}$ (temporally ignored in \eqref{incrmental err system new}) on the learning process.
The proof given in Theorem \ref{theorem equalivance} illustrates the rationality of incorporating $\Bar{\xi}_{oi}$ into the value function to address the estimation error during the optimization process.

For $\Delta u_{i_{r}} \in \Psi$, where $\Psi$ is the set of admissible incremental control policies \cite[Definition 1]{li2021model}, the associated optimal value function follows
\begin{equation} \label{optimal value function}
    V^{*}_i = \min_{\Delta u_{i_{r}} \in \Psi}\int_{t}^{\infty}  r_i(e_i(\nu),\Delta u_{i_{r}}(\nu))\,d\nu.
\end{equation}
Define the Hamiltonian function as 
\begin{equation} \label{Hamiltonian function}
    \small{H_i(e_i,\Delta u_{i_{r}},\nabla V_i) = r(e_i,\Delta u_{i_{r}}) + \nabla V^{{T}}_i( \bar{f}_i + \bar{g}_{i} \Delta u_{i_{r}})},
\end{equation}
where $\nabla (\cdot) := \partial (\cdot) / \partial e_i$. Then, $V^{*}_i$ satisfies the Hamilton-Jacobi-Bellman(HJB) equation
\begin{equation} \label{HJB equation}
    0 = \min_{\Delta u_{i_{r}} \in \Psi} [H_i(e_i,\Delta u_{i_{r}},\nabla V^{*}_i)].
\end{equation}
Assume that the minimum of \eqref{optimal value function} exists and is unique \cite{li2021model,vamvoudakis2010online}. By using the stationary optimality condition on the HJB equation \eqref{HJB equation}, we get the optimal incremental residual policy
\begin{equation} \label{optimal incremental u}
    \Delta u^{*}_{i_{r}} = - \beta \tanh(\frac{1}{2\beta} \bar{g}^{\top}_i \nabla V^{*}_i),
\end{equation}
in an analytical form.
To obtain $\Delta u^{*}_{i_{r}}$, we need to solve the HJB equation \eqref{HJB equation} to determine the value of $\nabla V^{*}_i$, which is detailly clarified in Section \ref{sec Approximate solutions}.

In the following part of this subsection, based on the estimation error bound given in Lemma \ref{boud of TDE AE}, we prove in Theorem \ref{theorem equalivance} that 
the incremental residual policy $\Delta u^{*}_{i_{r}}$ \eqref{optimal incremental u} for \eqref{incrmental err system new} robustly stabilize the incremental error dynamics \eqref{eq error dyn 2 part new}.

\begin{lemma} \label{boud of TDE AE}
Given a sufficiently high sampling rate, $\exists \Bar{\xi}_i \in \mathbb{R}^+$, there holds $\left\|\xi_i\right\| \leq \Bar{\xi}_i$. 
\end{lemma}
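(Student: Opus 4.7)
The plan is to exploit the explicit representation $\xi_i = h_i - \hat h_i = h_i(t) - h_i(t-L)$, which is immediate from the definition of $h_i$ in \eqref{incrmental system 1} together with the TDE formula \eqref{incrmental system 2}. Once $\xi_i$ is written as a temporal increment of $h_i$ over the window $L$, the bound reduces to showing that $h_i$ is sufficiently regular in time.

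First I would establish a crude, sampling-rate-independent bound. Since $M(q)$, $N(q,\dot q)$, and $F(\dot q)$ are continuous and uniformly bounded on the operating compact set \cite{lewis2020neural}, the entries $g_i^{-1} = M_{ii}$ and $f_i = -(\sum_{j\ne i} M_{ij}\ddot q_j + N_i + F_i)/M_{ii}$ are uniformly bounded, and so is $h_i = (\bar g_i^{-1} - g_i^{-1})\dot x_{i_2} + g_i^{-1} f_i$ provided the admissible trajectory and its derivatives stay bounded (which is ensured here by the saturation-type input penalty \eqref{control penalty function} and by the smoothness of $x_d$). This already yields $\|\xi_i\| \le 2\sup_t \|h_i(t)\|$, enough to conclude the lemma.

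To obtain the sharper, physically meaningful bound advertised by ``sufficiently high sampling rate'', I would apply the mean value theorem componentwise: $h_i(t) - h_i(t-L) = L\,\dot h_i(\eta_i)$ for some $\eta_i \in [t-L,t]$. Differentiating $h_i$ produces terms in $\dot x_{i_2}$, $\ddot x_{i_2}$, $\dot g_i^{-1}$, and $\dot f_i$, each of which is bounded on the compact operating region under the standing smoothness assumptions on the E-L terms. Hence $\|\dot h_i\| \le D_i$ for some constant $D_i$, and setting $\bar \xi_i := D_i L$ gives the claim and makes transparent that $\bar \xi_i$ shrinks as $L$ shrinks, justifying the ``sufficiently high sampling rate'' clause.

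The main obstacle is the potential circularity in bounding $\dot h_i$: since $\ddot x_{i_2}$ itself is governed by \eqref{i-th incrmental system line-2}, which involves $\Delta u_i$ and $\xi_i$, a naive bound on $\dot h_i$ feeds back into the quantity we are trying to bound. I would break this circle by restricting to admissible policies (so $\Delta u_{i_b}$ is saturated by $\beta$ via \eqref{control penalty function}), by invoking boundedness of $\Delta u_{i_f}$ along the tracked reference, and by confining the analysis to a forward-invariant compact set on which the E-L coefficients are globally Lipschitz. Under these assumptions the bound $\bar\xi_i$ is well-defined and finite, completing the proof.
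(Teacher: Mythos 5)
There is a genuine gap: your argument treats $h_i$ as an exogenous signal that is bounded (or differentiable with bounded derivative) on a compact operating set, but $h_i = \bar g_i^{-1}f_i + (\bar g_i^{-1}g_i-1)u_i$ depends on the \emph{total} input $u_i$, and $u_i$ is built recursively from its own past value plus $\Delta u_{i_f}$ in \eqref{ff incremental control}, which contains $-\bar g_i^{-1}\dot x_{i_{2,0}}$, i.e.\ precisely the time-delayed estimate $\hat h_i = h_i(t-L)$ from \eqref{incrmental system 2}. So $\xi_i(t)$ feeds back into $u_i(t)$ and hence into $h_i(t)$ and $\xi_i$ at the next step: the TDE error obeys a difference equation, not a simple increment of a fixed bounded function. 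Your proposed fix (saturation of $\Delta u_{i_b}$, boundedness of $\Delta u_{i_f}$, a forward-invariant compact set) does not break this loop, because the invariant compact set and the boundedness of $u_i$ are themselves consequences of the TDE error being bounded — the very thing to be proved. This is why the paper's proof derives the recursion \eqref{error bound eq10}, $\xi_i(k) = (1-g_i(k)\bar g_i^{-1})\xi_i(k-1) + (1-g_i(k)\bar g_i^{-1})\Delta\tilde u_{i_b} + \delta_{1i}+\delta_{2i}$, and imposes the contraction condition $\|1-g_i(k)\bar g_i^{-1}\|\le \iota_i<1$ through the choice of $\bar g_i$ (Appendix \ref{app section of gbar}); boundedness then follows from a geometric-series argument using $\|\Delta\tilde u_{i_b}\|\le 2\beta$ and the one-step perturbation bounds $\bar\delta_{1i},\bar\delta_{2i}$, yielding \eqref{error bound eq15}. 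Without that condition on $\bar g_i$ the TDE error can grow geometrically no matter how small $L$ is, so no mean-value-theorem bound of the form $\bar\xi_i = D_iL$ can be established; your proposal never identifies this requirement, which is the crux of the lemma.

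A secondary point: even granting boundedness, your sharper claim $\bar\xi_i = D_iL \to 0$ as $L\to 0$ overstates what holds in this setting. The paper's limiting bound $\frac{\bar\delta_{1i}+\iota_i\bar\delta_{2i}}{1-\iota_i}+\frac{2\iota_i\beta}{1-\iota_i}$ contains the term $2\iota_i\beta/(1-\iota_i)$, coming from the (only $2\beta$-bounded) change of the learned incremental policy between consecutive samples, which does not vanish with the sampling period. The high sampling rate is used to justify boundedness of the one-step variations $\delta_{1i},\delta_{2i}$, not to make $\bar\xi_i$ proportional to $L$. Also note that differentiating $h_i$ as you propose would require $\dot u_i$, which is not well defined for the sampled, piecewise-constant implementation considered here; the paper's discrete-time recursion avoids this entirely.
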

\begin{proof}
The proof is available in Appendix~\ref{app error bound}. 
\end{proof}
\begin{theorem}\label{theorem equalivance}
Consider the incremental error dynamics \eqref{eq error dyn 2 part new} with a sufficiently high sampling rate, 
 if there exists a scalar $\bar{c}_i \in \mathbb{R}^{+}$ such that the following inequality is satisfied
     \begin{equation} \label{equalivance condition}
    \bar{\xi}_i < \bar{c}_i  \left\| \Delta u_{i_{r}} \right\|,
    \end{equation}
  the optimal incremental residual policy \eqref{optimal incremental u} regulates the tracking error to a small neighborhood around zero while minimizing the value function \eqref{cost fuction}.
\end{theorem}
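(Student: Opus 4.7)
The plan is to use $V^*_i$ (the solution of the HJB equation \eqref{HJB equation}) as a Lyapunov candidate for the true system \eqref{error dynamics 3}. Because $r_i$ is positive definite in $e_i$ and $\mathcal{W}_i$, $\bar{\xi}_{oi}^2$ are non-negative, $V^*_i$ is positive definite and radially unbounded on any admissible trajectory. The optimality part of the claim, namely minimization of \eqref{cost fuction}, is immediate from the HJB construction of $\Delta u^*_{i_b}$ via the stationarity condition \eqref{optimal incremental u}, so the substantive content lies in showing regulation of $e_i$ to a small neighbourhood of the origin under the disturbance $\xi_i$.

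First, I would close the HJB loop. Evaluating the Hamiltonian \eqref{Hamiltonian function} at the minimizer $\Delta u^*_{i_b}$ and invoking \eqref{HJB equation} yields the identity $\nabla V^{*\top}_i(A_i e_i + B_i \Delta u^*_{i_b}) = -e_i^\top Q_i e_i - \mathcal{W}_i(\Delta u^*_{i_b}) - \bar{c}_i^2\|\Delta u^*_{i_b}\|^2$. Differentiating $V^*_i$ along the trajectories of the actual system \eqref{error dynamics 3}, which differs from the nominal one \eqref{error dynamics 4} only by the additive disturbance $B_i\xi_i$, I would obtain
\begin{equation*}
\dot V^*_i = -e_i^\top Q_i e_i - \mathcal{W}_i(\Delta u^*_{i_b}) - \bar{c}_i^2\|\Delta u^*_{i_b}\|^2 + \nabla V^{*\top}_i B_i \xi_i .
\end{equation*}
The first three terms are non-positive; the last is the disturbance-induced cross term that must be controlled.

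Next, I would bound this cross term. By Cauchy--Schwarz and Lemma \ref{boud of TDE AE}, $|\nabla V^{*\top}_i B_i \xi_i| \leq |B_i^\top \nabla V^*_i|\,\bar{\xi}_i$, and invoking hypothesis \eqref{equalivance condition} replaces $\bar{\xi}_i$ by $\bar{c}_i\|\Delta u^*_{i_b}\|$. A Young-type split $\bar{c}_i|B_i^\top \nabla V^*_i|\,\|\Delta u^*_{i_b}\| \leq \tfrac{1}{2}|B_i^\top \nabla V^*_i|^2 + \tfrac{\bar{c}_i^2}{2}\|\Delta u^*_{i_b}\|^2$ allows half of the extra penalty $\bar{c}_i^2\|\Delta u^*_{i_b}\|^2$ to be absorbed, leaving only the residual $\tfrac{1}{2}|B_i^\top \nabla V^*_i|^2$ to be dominated. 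To close this last gap, I would exploit the explicit relation $B_i^\top \nabla V^*_i = -2\beta\tanh^{-1}(\Delta u^*_{i_b}/\beta)$ implied by \eqref{optimal incremental u} together with the integral form of $\mathcal{W}_i$ in \eqref{control penalty function}, and use boundedness of $\nabla V^*_i$ on compact sublevel sets of $V^*_i$ to return the remainder to an $e_i$-dependent expression that is absorbed by the $-e_i^\top Q_i e_i$ term outside a compact residual ball.

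The step I expect to be the main obstacle is this last domination, because $|B_i^\top \nabla V^*_i|^2$ grows unboundedly as $\Delta u^*_{i_b}$ approaches the saturation boundary $\pm\beta$, while $\mathcal{W}_i(\Delta u^*_{i_b})$ remains finite on that same regime. A careful argument is needed to show that either the Young coefficient can be retuned to keep $\tfrac{1}{2}|B_i^\top \nabla V^*_i|^2 - \mathcal{W}_i(\Delta u^*_{i_b})$ in check, or that the saturated regime is automatically excluded by the growth of $e_i^\top Q_i e_i$ on any sublevel set of $V^*_i$. Once $\dot V^*_i \leq 0$ is established outside a compact neighbourhood of the origin, uniform ultimate boundedness of $e_i$ follows from standard Lyapunov arguments, completing the regulation claim; the minimization claim is inherited from \eqref{HJB equation}.
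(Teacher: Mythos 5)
Your skeleton matches the paper's: use $V^*_i$ as a Lyapunov candidate, differentiate along the perturbed system \eqref{error dynamics 3}, evaluate the nominal part through the HJB identity \eqref{equalivance eq2}, and then deal with the cross term $\nabla V_i^{*\top}B_i\xi_i=-2\beta\tanh^{-1}\!\left(\Delta u^*_{i_b}/\beta\right)\xi_i$. The genuine gap is exactly the step you flag as the ``main obstacle'' and leave open. After your Cauchy--Schwarz/Young split you are stuck with $\tfrac12\left|B_i^\top\nabla V_i^*\right|^2=2\beta^2\left(\tanh^{-1}\!\left(\Delta u^*_{i_b}/\beta\right)\right)^2$, which, as you correctly note, is not dominated by $\mathcal{W}_i(\Delta u^*_{i_b})$ (the penalty stays finite as $\Delta u^*_{i_b}\to\pm\beta$ while the inverse hyperbolic tangent blows up), and neither of your two suggested escape routes is actually executed. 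The paper closes this step with a different mechanism: it invokes the closed-form expansion of the saturated penalty from \cite[Theorem 1]{li2021model}, $\mathcal{W}_i(\Delta u^*_{i_b})=\beta^2\left(\tanh^{-1}\!\left(\Delta u^*_{i_b}/\beta\right)\right)^2-\epsilon_{u_i}$ with $\epsilon_{u_i}\le\tfrac12\bar g_i^2\|\nabla V_i^*\|^2$ (see \eqref{equalivance eq4}), and uses that quadratic term at full strength to complete the square against the cross term, $-\beta^2(\tanh^{-1})^2-2\beta\tanh^{-1}(\cdot)\,\xi_i=-\left[\beta\tanh^{-1}(\cdot)+\xi_i\right]^2+\xi_i^2$. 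The leftover $+\xi_i^2$ is then killed by the designed term $\bar{\xi}_{oi}^2=\bar c_i^2\|\Delta u_{i_b}\|^2$, since \eqref{equalivance condition} and Lemma \ref{boud of TDE AE} give $\bar{\xi}_{oi}^2-\xi_i^2>0$; note the hypothesis is used to discard this difference, not (as in your split) to trade $\bar\xi_i$ for half of the penalty. What survives is only $\epsilon_{u_i}\le b_{\epsilon_{ui}}\le\tfrac12\bar g_i^2 b^2_{\nabla V_i^*}$, so $\dot V_i^*\le -e_i^\top Q_i e_i+b_{\epsilon_{ui}}$ and $e_i$ converges to the ball \eqref{equalivance eq9}. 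Also observe that even if you had the identity in hand, your equal-weight Young split is mis-weighted: $\mathcal{W}_i$ supplies exactly one copy of $\beta^2(\tanh^{-1})^2$, which is all that the unit-coefficient square completion needs, and there is nothing left over to cover an extra $\beta^2(\tanh^{-1})^2$ coming from a $\tfrac12$-split.

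Two further remarks. Your worry about the saturation boundary is defused in the paper not by a delicate sublevel-set argument but simply by assuming $\|\nabla V_i^*\|\le b_{\nabla V_i^*}$; since $B_i^\top\nabla V_i^*=-2\beta\tanh^{-1}\!\left(\Delta u^*_{i_b}/\beta\right)$, a bounded costate keeps the optimal increment strictly inside $(-\beta,\beta)$. Under that same assumption your Young leftover is also a constant, $\tfrac12\left|B_i^\top\nabla V_i^*\right|^2\le\tfrac12\bar g_i^2 b^2_{\nabla V_i^*}$, so your route could be repaired and would give a comparable residual set---but you would have to commit to that boundedness (or make the compact-invariance argument rigorous), whereas as submitted the proof stops at an acknowledged open step. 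Finally, as in your proposal, the paper treats the ``minimizing'' part as inherited from the HJB construction and only proves the regulation claim.
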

\begin{proof}
See Appendix \ref{app equalivance proof}. 
\end{proof}
Theorem \ref{theorem equalivance} implies that the incremental residual policy $\Delta u^{*}_{i_{r}}$ \eqref{optimal incremental u} robustly stabilize \eqref{eq error dyn 2 part new} in an optimal manner.
Thus, the combination with the incremental residual policy $\Delta u^{*}_{i_{r}}$  and the incremental base policy $\Delta u_{i_{b}}$ solves Problem \ref{Problem 2} together.

\begin{remark}
    The robot working at a sufficiently high sampling rate is the prerequisite of Lemma \ref{boud of TDE AE}.  
    The high sampling rate is only possible for methods with low computational complexity. 
     To make our proposed DR-RL method work at a high sampling rate,
     this work uses a parallel learning architecture to distribute the intensive computation load into multiple processors;
     and seeks an analytical solution to the incremental residual policy that favors low computation complexity.
\end{remark}

\section{Parallel Computation of Approximate Solution} \label{sec Approximate solutions}
This section presents parallelized critic agents for learning the approximate solution to $\nabla V^{*}_i$ in the HJB equation \eqref{HJB equation} via a computationally efficient parallel way.
Thereby, an approximation to the optimal incremental residual policy \eqref{optimal incremental u} is obtained.
The exploitation of realtime and experience data together facilitates one simple yet efficient off-policy critic neural network (NN) weight update law with guaranteed weight convergence and improved sample efficiency.
\subsection{Value Function Approximation} \label{sec vaule function appr}
For $e_i \in \Omega$, where $\Omega \subset \mathbb{R}^{n_i}$ is a compact set, the $i$th continuous optimal value function \eqref{optimal value function} is approximated by $i$th parallelized critic agent as \cite{vamvoudakis2010online}
\begin{equation}\label{optimal V approximation}
    V^{*}_{i} ={W_{i}^*}^{\top} \Phi_i(e_i) + \epsilon_i(e_i),
\end{equation}
where ${W_{i}^*} \in \mathbb{R}^{N_i}$, $\Phi_i(e_i): \mathbb{R}^{n_i} \to \mathbb{R}^{N_i}$, and $\epsilon_i(e_i) \in \mathbb{R}$ denote the NN weight, the activation function, and the approximation error of the $i$th parallelized critic agent, respectively.
\begin{remark}
The utilized decoupling technique in Section \ref{sec incremental subsystem} solves the curse of complexity problem in value function approximation. 
In particular, the size of the constructed $i$th critic NN \eqref{optimal V approximation} relies on the dimension of the local error $e_i$. 
The $n_i$-D $e_i$ allows us to construct a low-dimensional $\Phi_i(e_i)$ (easy to choose) 
\footnote{It is displayed in Sections \ref{sec simulation} and \ref{section Experimental validation} that 4-D activation functions $\Phi_i(e_i)$ in a fixed structure are chosen for subsystems of 2-DoF, 3-DoF, and 7-DoF robot manipulator, and also a 6-DoF quadrotor in the appendix.} 
to approximate its associated $V^{*}_{i}$ regardless of the value of the original robot dimension $n$. Otherwise, for a global approximation, i.e., $V^{*} ={W^*}^{\top} \Phi(e) + \epsilon(e)$ with  $e := x-x_d \in \mathbb{R}^{n}$, the dimension of $\Phi(e)$ increases exponentially as $n$ increases.
\end{remark}

To facilitate the later theoretical analysis, here we provide an assumption that is common in related works \cite{vamvoudakis2010online}.
 \begin{assumption} \label{bound of NN issues}
    There exist constants $b_{\epsilon_{i}}, b_{\epsilon_{ei}}, b_{\epsilon_{hi}}, b_{\Phi_i}, b_{\Phi _{ei}}\in \mathbb{R}^{+}$ such that $\left\| \epsilon_i(e_i)  \right\| \leq b_{\epsilon_{i}}$, $\left\| \nabla\epsilon_i(e_i)  \right\| \leq b_{\epsilon_{ei}}$, $\left\| \epsilon_{hi} \right\| \leq b_{\epsilon_{hi}}$, $\left\| \Phi_i(e_i) \right\| \leq b_{\Phi_i}$, and $\left\| \nabla\Phi_i(e_i) \right\| \leq b_{\Phi _{ei}}$.
 \end{assumption} 
     
    Given a fixed $i$th incremental residual policy $\Delta u_{i_{r}}$, combining \eqref{HJB equation} with  \eqref{optimal V approximation}  yields
    \begin{equation}\label{approximation Lyapunov equation}
      {W^*_{i}}^{\top}\nabla \Phi_{i}( \bar{f}_i + \bar{g}_i \Delta u_{i_{r}} )+r_i(e_i,\Delta u_{i_{r}}) = \epsilon_{h_{i}},
    \end{equation}
    where the $i$th residual error follows $\epsilon_{h_{i}} := -\nabla \epsilon^{\top}_i(  \bar{f}_i + \bar{g}_i \Delta u_{i_{r}})  \in \mathbb{R}$.
    The NN parameterized \eqref{approximation Lyapunov equation} is 
    rewritten as
    \begin{equation}\label{LIP Lyapunov equation}  
    \Theta_i = -{W^*_i}^{\top}Y_i+\epsilon_{h_i},
    \end{equation}
    where $\Theta_i := r_i(e_i,\Delta u_{i_{r}}) \in \mathbb{R}$, and $Y_i := \nabla \Phi_i(\bar{f}_i + \bar{g}_i \Delta u_{i_{r}}) \in \mathbb{R}^{N_i}$. This formulated linear in parameter form 
    simplifies the development of an efficient NN weight update law in the subsequent subsection.
    \subsection{Off-Policy Critic NN Weight Update Law} \label{sec Off-policy weight update law}
    An approximation of \eqref{LIP Lyapunov equation} follows
    \begin{equation}\label{approximation LIP Lyapunov equation}
    \hat{\Theta}_i = -\hat{W}_i^{\top}Y_i,
    \end{equation}
    where $\hat{W}_i \in \mathbb{R}^{N_i}$, $\hat{\Theta}_i \in \mathbb{R}$ are estimates  of $W_i^*$ and $\Theta_i$, respectively.
    To achieve $\hat{W}_i \to W_i^{*}$, we design the off-policy critic NN weight update law
    \begin{equation} \label{w update law}
        \dot{\hat{W}}_i = - \Gamma_i k_{t_{i}} Y_i\Tilde{\Theta}_i -  \sum_{l=1}^{P_{i}} \Gamma_{i} k_{e_{i}} Y_{{i_{l}}}\Tilde{\Theta}_{{i_{l}}},
    \end{equation}
    for the $i$th parallelized critic agent to learn the NN weight $\hat{W}_i$ in a parallel way via minimizing $E_i := \frac{1}{2} \Tilde{\Theta}_i^{\top}\Tilde{\Theta}_i$, where $\Tilde{\Theta}_i := \Theta_i - \hat{\Theta}_i\in \mathbb{R}$. Here $\Gamma_{i} \in \mathbb{R}^{N_i \times N_i}$ is a constant positive definite gain matrix; $k_{t_{i}}, k_{e_{i}} \in \mathbb{R}^{+}$ are used to trade-off the contribution of realtime and experience data to the online NN weight learning process; $P_i \in \mathbb{R}^{+}$ is the number of recorded experience data. 

    To guarantee the weight convergence of \eqref{w update law}, as proved in Lemma \ref{Theorem weight convergence}, the exploited experience data should be sufficiently rich to satisfy the rank condition in Assumption~\ref{rank condition}, which could be easily satisfied by sequentially reusing experience data \cite{li2021model}.
    \begin{assumption} \label{rank condition}
          Given an experience buffer $\mathfrak{B}_{i} = [Y_{i_{1}},...,Y_{i_{P_{i}}}] \in \mathbb{R}^{N_i \times P_i}$, 
          there holds $rank(\mathfrak{B}_{i}) = N_i$.
    \end{assumption} 
\begin{lemma}\label{Theorem weight convergence}
Given Assumption~\ref{rank condition}, the NN weight learning error $\Tilde{W}$ converges to a small neighborhood around zero.
\end{lemma}
\begin{proof}
The proof is similar to our previous work \cite{li2021model}, which relates with the optimal regulation control problem. Thus, it is omitted here due to page limits.
\end{proof}
The guaranteed weight convergence of $\hat{W}_i$ to $W_i^*$ presented in Lemma \ref{Theorem weight convergence} permits us to adopt a computation-simple single critic NN structure, where the estimated critic NN weight $\hat{W}_i$ is directly used to construct the approximate optimal incremental residual policy:
\begin{equation} \label{optimal incremental hat u}
    \Delta \hat{u}_{i_{r}}= - \beta \tanh(\frac{1}{2\beta}\bar{g}^{\top}_i \nabla \Phi_{i}^{\top}\hat{W}_i). 
\end{equation}
 \begin{figure*}[!t]
    \centering
    \includegraphics[width=7.2in]{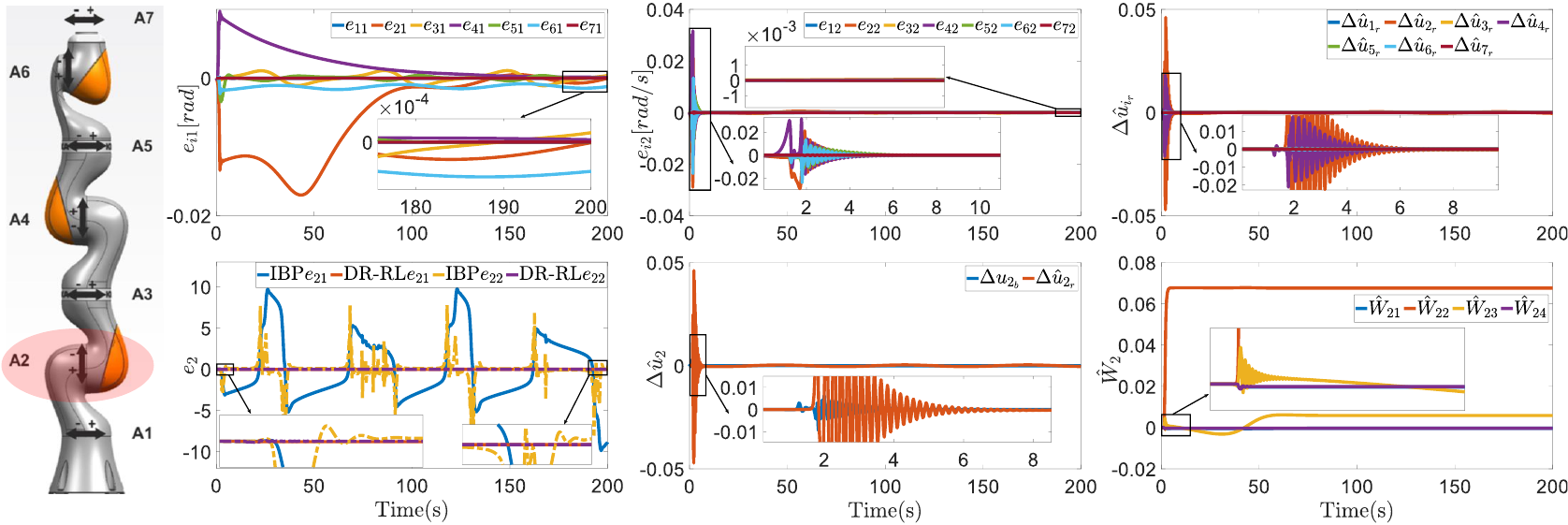}
    \caption{
    The numerical validation results on the 7-DoF KUKA iiwa robot manipulator.
    Top: the evolution trajectories of the $i$th subsystem's tracking error $e_{i_{1}}$, $e_{i_{2}}$ and the learned IRP $\Delta \hat{u}_{i_{r}}$, $i = 1,2,\cdots,7$;
    Bottom: the comparative simulation results focusing on the $2$nd subsystem, including the associated error trajectories of both incremental base policy (IBP) and DR-RL cases, the evolution trajectories of the IBP $\Delta \hat{u}_{i_{r}}$ and the incremental residual policy $\Delta \hat{u}_{i_{r}}$, and the weight convergence result.} 
    \label{kuka traj of w and e}
\end{figure*}
Finally, combining with \eqref{incremental control} and \eqref{optimal incremental hat u} gets the robotic policy
\begin{equation} \label{optimal  overall u}
     \hat{u}_i =  u_{i,0} +  \Delta u_{i_{b}}+ \Delta \hat{u}_{i_{r}},
\end{equation}
applied at the $i$th subsystem \eqref{i-th state space system}.
Based on the theoretical analysis mentioned above, we provide the main conclusions in Theorem \ref{main Theorem}.
\begin{figure*}[!t]
    \centering
    \includegraphics[width=7.2in]{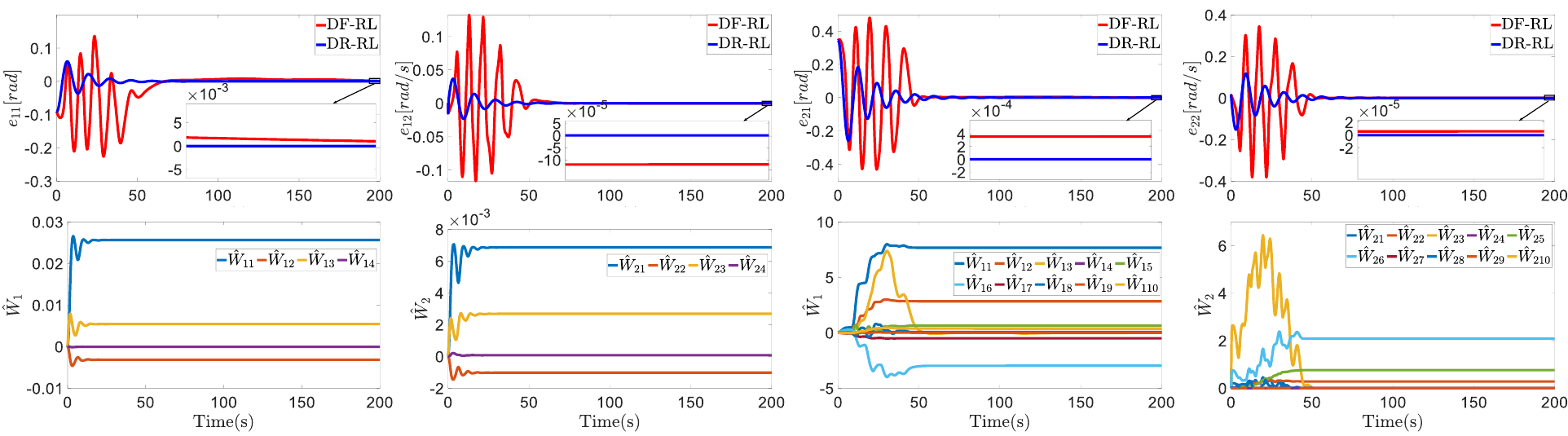}
    \caption{
    The comparative numerical simulation results about the task-space task of the 2-DoF robot manipulator.
    Top: the evolution trajectories of the $i$th subsystem's tracking errors $e_{i_{1}}$, $e_{i_{2}}$ based on the DF-RL and DR-RL methods, $i = 1,2$;
    Bottom: the evolution trajectories of the $i$th subsystem's 4-D estimated NN weight $\hat{W}_i = [\hat{W}_{i_{1}},\cdots,\hat{W}_{i_{4}}]^{\top}$ for the  DR-RL method, and the $i$th subsystem's
    10-D estimated NN weight $\hat{W}_i = [\hat{W}_{i_{1}},\cdots,\hat{W}_{i_{10}}]^{\top}$  for the DF-RL method, $i = 1,2$.} 
    \label{traj of w and e 2dof task space}
\end{figure*}
 \begin{theorem}\label{main Theorem}
Given Assumptions \ref{bound of NN issues}--\ref{rank condition}, for a sufficiently large $N_i$, 
the off-policy critic NN weight update law \eqref{w update law},
and the approximate optimal incremental residual policy \eqref{optimal incremental hat u} guarantee the tracking error and the NN weight learning error uniformly ultimately bounded (UUB).
\end{theorem}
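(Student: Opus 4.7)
The plan is to construct a composite Lyapunov candidate that couples the tracking-error subsystem with the critic weight estimation error, and then show a dominant negative-definite term plus a bounded residual, which is the standard recipe for UUB. Specifically, I would take
\begin{equation}
L_i(e_i,\tilde{W}_i) = V^{*}_i(e_i) + \tfrac{1}{2}\tilde{W}_i^{\top}\Gamma_i^{-1}\tilde{W}_i,
\end{equation}
where $\tilde{W}_i := W_i^{*} - \hat{W}_i$. Note that $V^{*}_i$ is positive definite by \eqref{cost fuction}--\eqref{optimal value function}, and the quadratic form in $\tilde{W}_i$ is positive definite since $\Gamma_i\succ 0$, so $L_i$ is a legitimate Lyapunov candidate on the composite state $z_i := [e_i^{\top},\tilde{W}_i^{\top}]^{\top}$.

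The first step is to differentiate $V^{*}_i$ along the closed-loop trajectory \eqref{error dynamics 3} driven by $\Delta\hat{u}_{i_{b}}$ from \eqref{optimal incremental hat u} rather than the exact $\Delta u^{*}_{i_{b}}$ from \eqref{optimal incremental u}. Using the HJB identity \eqref{HJB equation} one rewrites $\nabla V^{*\top}_i(A_ie_i+B_i\Delta u^{*}_{i_{b}})=-r_i(e_i,\Delta u^{*}_{i_{b}})$; adding and subtracting this in $\dot V^{*}_i$ produces (i) a negative-definite term $-e_i^{\top}Q_ie_i$, (ii) a perturbation $\nabla V^{*\top}_iB_i(\Delta\hat{u}_{i_{b}}-\Delta u^{*}_{i_{b}})$ that the mean value theorem and the boundedness of $\tanh$-saturated policies let us bound linearly in $\|\tilde{W}_i\|$, and (iii) a term $\nabla V^{*\top}_iB_i\xi_i$ that is bounded by Lemma~1 and Assumption~1. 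The second step is to differentiate the weight-error quadratic using \eqref{w update law}, which, combined with the LIP form \eqref{LIP Lyapunov equation}, yields
\begin{equation}
\dot{\tilde{W}}_i^{\top}\Gamma_i^{-1}\tilde{W}_i \le - k_{t_i}\|Y_i^{\top}\tilde{W}_i\|^{2} - k_{e_i}\,\tilde{W}_i^{\top}\!\Bigl(\sum_{l=1}^{P_i}Y_{i_l}Y_{i_l}^{\top}\Bigr)\tilde{W}_i + \text{NN-residual terms},
\end{equation}
where Assumption~2 ensures $\sum_l Y_{i_l}Y_{i_l}^{\top}\succ 0$, giving the crucial negative-definite term in $\tilde{W}_i$; the residual terms stemming from $\epsilon_{h_i}$ are bounded by Assumption~1.

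Combining the two parts, completing the square on every cross term of the form $\|\tilde{W}_i\|\|e_i\|$, $\|\tilde{W}_i\|\cdot\bar{\xi}_i$, and $\|\tilde{W}_i\|\cdot b_{\epsilon_{h_i}}$, I aim to bring $\dot L_i$ into the form
\begin{equation}
\dot{L}_i \le -\alpha_{e_i}\|e_i\|^{2} - \alpha_{W_i}\|\tilde{W}_i\|^{2} + \delta_i,
\end{equation}
with $\alpha_{e_i},\alpha_{W_i}>0$ (achievable by choosing the minimum eigenvalues of $Q_i$ and $k_{e_i}\sum_l Y_{i_l}Y_{i_l}^{\top}$ large enough relative to the cross-coupling gains, which in turn is guaranteed for sufficiently large $N_i$ making the bounds on $\epsilon_i$ and $\epsilon_{h_i}$ small) and $\delta_i$ a constant depending on $\bar{\xi}_i$, $b_{\epsilon_{h_i}}$, and $b_{\epsilon_{e_i}}$. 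Standard Lyapunov arguments then yield that $\|z_i\|$ enters and remains inside the ball of radius $\sqrt{\delta_i/\min(\alpha_{e_i},\alpha_{W_i})}$, establishing the claimed UUB of both $e_i$ and $\tilde{W}_i$; finally, since the $n$ subsystems are trained and analyzed in parallel, the conclusion aggregates over $i=1,\dots,n$.

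The principal obstacle will be cleanly handling the perturbation induced by replacing $W_i^{*}$ with $\hat{W}_i$ inside the saturated policy \eqref{optimal incremental hat u}. The $\tanh$ nonlinearity prevents a direct linear difference, but because $\tanh$ is globally Lipschitz with constant one, the difference $\Delta\hat{u}_{i_{b}}-\Delta u^{*}_{i_{b}}$ can be upper bounded by $\tfrac{1}{2}\|B_i^{\top}\nabla\Phi_i^{\top}\tilde{W}_i\|$ plus an $O(b_{\epsilon_{e_i}})$ term, after which the cross terms are absorbed into the quadratic bounds above. The TDE-induced term $\nabla V^{*\top}_iB_i\xi_i$ requires Lemma~1 and Theorem~1's inequality \eqref{equalivance condition} to ensure $\delta_i$ stays finite regardless of the operating regime.
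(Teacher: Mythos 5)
Your proposal is correct and follows essentially the same route as the paper's proof: the identical composite Lyapunov function $L_i = V^{*}_i + \tfrac{1}{2}\tilde{W}_i^{\top}\Gamma_i^{-1}\tilde{W}_i$, the HJB identity to expose $-e_i^{\top}Q_ie_i$, a mean-value/Taylor bound on the $\tanh$-saturated policy mismatch linear in $\|\tilde{W}_i\|$, the rank condition (Assumption 2) producing the negative-definite weight-error term, and completion of squares to obtain the UUB residual set. The only cosmetic difference is that the paper's cross terms involve $\|\tilde{W}_i\|$ and constants only (no $\|e_i\|\|\tilde{W}_i\|$ coupling), and the condition of Theorem 1 is used there to keep the TDE-related term $\mathcal{A}_i$ positive rather than to bound the residual constant.
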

\begin{proof}
See Appendix \ref{app stability proof}.
\end{proof}
\section{Comparative Numerical Simulation} \label{sec simulation}
This section presents comparative numerical validations to show the superiority of our proposed DR-RL method over other methods.
We firstly compare the performance of the incremental base policy and the DR-RL based control strategies on the basis of the high-dimensional 7-DoF KUKA iiwa robot manipulator and 6-DoF quadrotor, which clearly present the improved task performance induced by the online learning part.
Then, we compare with two fine-tuned baselines \cite{modares2014optimal,kamalapurkar2015approximate} focusing on a 2-DoF benchmark robot manipulator to show the superiority of our proposed DR-RL over the baseline methods regarding sample efficiency, tracking performance, and task flexibility.

\subsection{Validations on high-dimensional robotics} 
This subsection fully validates the performance of our proposed DR-RL method based on the high-dimensional robotic systems: the 7-DoF KUKA iiwa robot manipulator in Matlab Robotics System Toolbox
and the commonly used benchmark 6-DoF quadrotor platform. 

A desired trajectory $x_d = [q_d^{\top}, \dot{q}^{\top}_d]^{\top}  \in \mathbb{R}^{14}$, wherein $q_d =(1+\sin{(\frac{t}{10}-\frac{\pi}{2}}))k_{p_{1}} \in \mathbb{R}^7$ with $k_{p_{1}} = 0.1\boldsymbol{I}_{7}$ is adopted to simulate a manipulation task for the KUKA iiwa robot manipulator. We adopt the 4-D activation function $\Phi_i(e_i)=[e^{2}_{i_{1}},e^{2}_{i_{2}},e_{i_{1}}e_{i_{2}},e^{3}_{i_{2}}]^{\top}$  for the accurate value function approximation of the $i$th subsystem, $i = 1,2,\cdots,7$.
The detailed parameter settings of the incremental base policy and the DR-RL based control policies are referred to in Table I in the appendix.

The error trajectories of the joint angles and angle velocities present in Fig.~\ref{kuka traj of w and e} fully validate the performance of our proposed method. Furthermore, the comparative results focusing the $2$th subsystem 
(difficulty in control due to gravity) 
exemplify the enhanced task performance from the online learning part, which realizes the desired weight convergence as presented in Fig.~\ref{kuka traj of w and e}.

Furthermore, we test the generality of our proposed DR-RL method on different high-dimensional robotic systems via the trajectory tracking task of one 6-DoF benchmark quadrotor platform.
The detailed parameter settings and results are referred to in  Appendix~\ref{sim 6dof uav}, which are omitted here due to the page limit.

\subsection{Validations of Sample Efficiency and Task Flexibility} 
This subsection considers a task-space circle tracking task (centered at $c = (1,1)$ with radius $r = 0.5$) of a 2-DoF robot manipulator (the detailed model information is referred to in Appendix~\ref{sim setting}) for our proposed DR-RL method and one learning-from-scratch RL approach that uses a discounted factor suppressed value function \cite{modares2014optimal} to learn the approximate optimal tracking control policy (referred to as DF-RL for simplicity).
The associated joint-space
reference trajectory $x_d = \left[ q^{\top}_d, \dot{q}^{\top}_d \right]^{\top} \in \mathbb{R}^4$ is calculated through analytical inverse kinematics. 
The DF-RL \footnote{We introduce the parallel learning architecture to reformulate the method proposed in \cite{modares2014optimal} to make it work on a 2-DoF robot manipulator.} approach adopts a 10-D activation function for the accurate value function approximation of each subsystem, which is referred to in Appendix~\ref{sim setting}.
While our proposed  DR-RL approach only requires a 4-D activation function 
$\Phi_i(e_i)=[e^{2}_{i_{1}},e^{2}_{i_{2}},e_{i_{1}}e_{i_{2}},e^{3}_{i_{2}}]^{\top}$ 
for each subsystem, same as the ones used for KUKA iiwa robot manipulator and the quadrotor.
The sampling rate is chosen as $1\si{\kHz}$. 
The parameter settings for the DR-RL and DF-RL approaches are referred to in Table III in the Appendix~\ref{sim setting}.

A broader spatial variance during the initial learning period is observed for the DF-RL approach in the top four subfigures in Fig.~\ref{traj of w and e 2dof task space}. This is undesirable for hardware deployments.
While our developed DR-RL approach explores a smaller set of states for the online learning process and converges faster.
This is because the guidance from the incremental base policy helps decrease the exploration space.
Besides, we observe that our proposed DR-RL approach realizes higher tracking accuracy than the DF-RL approach which learns from scratch.
This benefits from the residual learning mechanism.
We observe from the bottom four subfigures in Fig.~\ref{traj of w and e 2dof task space} that the weight trajectories of our developed DR-RL approach converge faster than the DF-RL approach. The faster convergence rate implies  less required  data  for learning one satisfying policy. This validates the improved sample efficiency brought by the residual formulation and the MBRL mechanism utilized in our work.

To validate the enhanced task flexibility of our
proposed DR-RL method over the baseline RL-based tracking control strategy \cite{kamalapurkar2015approximate} developed under an augmented system, we  focuses on a complex robotic manipulation task composed of trajectories represented as different trajectory dynamics. The detailed results are referred to in Appendix~\ref{sim dof 2}, which is omitted due to page limits.

    \begin{figure*}[!t]
    \centering
    \includegraphics[width=7.2in]{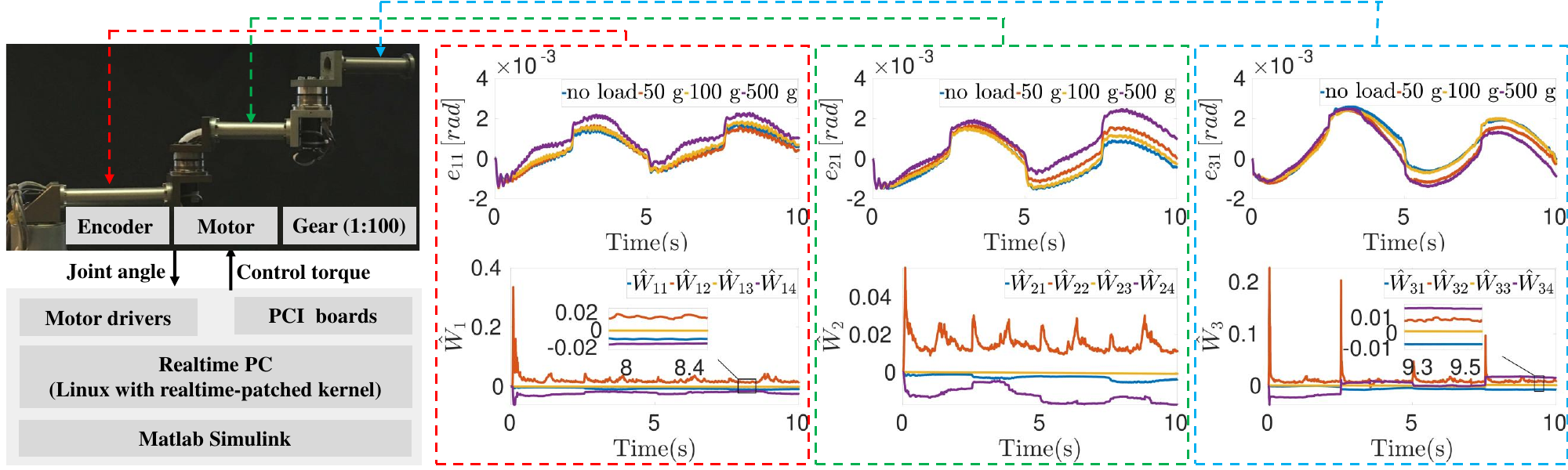}
    \caption{
    The experimental validation results about the 3-DoF robot manipulator.
    Top: the evolution trajectories of the $i$th subsystem's tracking error $e_{i_{1}}$ under different payloads, $i = 1,2,3$;
    Bottom: the evolution trajectories of the $i$th subsystem's 4-D estimated weight $\hat{W}_i = [\hat{W}_{i1},\cdots,\hat{W}_{i4}]^{\top}$, $i = 1,2,3$ for the $500~\si{\gram}$ payload case.
    }  
    \label{traj of w and e}
\end{figure*}

\section{Experimental validation} \label{section Experimental validation}
This section experimentally validates the efficiency of our proposed DR-RL based tracking control policy on a 3-DoF robot manipulator (see Fig.~\ref{traj of w and e}). 
The detailed hardware information is provided in our previous work \cite{li2021concurrent}.

During the experiment, the robot manipulator is driven to track the desired piecewise trajectory $x_d = [q_d^{\top}, \dot{q}^{\top}_d]^{\top}  \in \mathbb{R}^{6}$ with
    $q_d = (1+\sin{(\frac{t}{2}-\frac{\pi}{2}}))k_{p_{2}} \in \mathbb{R}^3$,
where $k_{p_{2}} = [0.3,0.6,1]^{\top}$ for $t \in \left [0, 5\right )$, and $k_{p_{2}}  = [0.2,0.5,0.8]^{\top}$ for $t \in \left [5, 10\right]$.
The sampling rate is set as $1\si{\kHz}$. 
Note that neither DF-RL nor AS-RL based tracking control policy is intractable to complete the tracking task provided here.
This is because it is nontrivial to find
the high-dimensional activation function required for accurate value function approximation of DF-RL and AS-RL methods.
Even though a  high-dimensional activation function is available, the realtime performance of the corresponding weight update law is poor for practical experiments.

Regarding the incremental residual policy, 
we choose the 4-D activation function $\Phi_i(e_i)=[e^{2}_{i_{1}},e^{2}_{i_{2}},e_{i_{1}}e_{i_{2}},e^{3}_{i_{2}}]^{\top}$ (same as the one for the 7-DoF KUKA iiwa robot manipulator) for the value function approximation of the $i$th subsystem, $i = 1,2,3$. The utilized low-dimensional activation function $\Phi_i(e_i)$ in a fixed structure exemplifies our method's scalability and practicability towards different robotic systems.
The parameters for subsystems 1-3 are set as:
$Q_i = \diag{[300,40000]}$, $\bar{c}_i = 200$,  
$\Gamma_i =  \diag{[100,4,0.1,16]}$, $k_{t_{i}} = 0.2$, $k_{e_{i}} = 0.01$, $P_i = 10$,
$k_i = \diag{[8,8]}$, $i = 1,2,3$;
and $\beta = 0.1$, $\bar{g}_1 =40$,  $\bar{g}_2 =46$, and $\bar{g}_3 =54$.

The trajectories of $e_{i_{1}}$,  $i = 1,2,3$ under different payloads (installed to the end effector of the robot manipulator) are displayed in the top three subfigures in Fig.~\ref{traj of w and e}. It is shown that our developed tracking control scheme efficiently tracks the desired trajectories with satisfying tracking precision and robustness against varying payloads. 
Three subsystems' $\hat{W}_i$ of the $500~\si{\gram}$ payload case, which are trained in parallel using realtime and experience data together, are displayed in the bottom three subfigures in Fig.~\ref{traj of w and e}. 
We obtain the desired weight convergence for each subsystem. This validates the realtime performance of our developed weight update law
\eqref{w update law} and the efficiency of the parallel learning mechanism.

\section{Conclusion} \label{section conclusion}
This work develops a sample-efficient and scalable DR-RL method applicable to high-dimensional robotic tracking control tasks.
The sample efficiency is improved by guiding the learning process with expert knowledge or human demonstration, and using both implicit model information and off-policy experience data for the learning process. 
The scalability towards high-dimensional robots is through the parallel learning architecture, wherein critic agents learn in parallel to jointly solve the task.
Comparative numerical and experimental results validate the effectiveness of our proposed DR-RL approach.
In the future, the proposed method remains to  be further improved from the following perspectives: addressing the input saturation problem, extending to control nonaffine systems, investigating estimation methods for indirectly measurable states, and online learning of the gain matrix $\bar{g}_i$ in \eqref{incrmental system}.
\bibliographystyle{IEEEtran}
\bibliography{bibtex/bib/main}

\appendices

\section{Proof of Theorem 1} \label{app equalivance proof}
\begin{proof}
Considering that $V^{*}_i(e_i=0) =0$,  and $V^{*}_i >0$ for $\forall e_i \ne 0$, $V^{*}_i$ in \eqref{optimal value function} could serve as a candidate Lyapunov function. Taking time derivative of $V^{*}_i$ along the $i$th incremental error subsystem \eqref{eq error dyn 2 part new} yields
 \begin{equation} \label{equalivance eq1}
    \begin{aligned}
    \dot{V}^{*}_i = \nabla V^{*}_i ( \bar{f}_i + \bar{g}_i \Delta u^*_{i_{r}}) + \nabla V^{*}_i \bar{g}_i \xi_{i}. 
    \end{aligned}
\end{equation}
According to  \eqref{Hamiltonian function} and \eqref{HJB equation}, the following equations establish
 \begin{equation} \label{equalivance eq2}
    \begin{aligned}
    \nabla V^{*}_i (\bar{f}_i + \bar{g}_i \Delta u^*_{i_{r}}) &=  -e^{\top}_i Q_i e_i - W_{i}(\Delta u^*_{i_{r}}) - \Bar{\xi}^2_{oi}\\
    \nabla V^{*}_i \bar{g}_i &= -2 \beta \tanh^{-1}(\Delta u^*_{i_{r}}/\beta).\\ 
    \end{aligned}
\end{equation}
Substituting \eqref{equalivance eq2} into \eqref{equalivance eq1} yields
 \begin{equation} \label{equalivance eq3}
    \begin{aligned}
    \dot{V}^{*}_i = -e^{\top}_i Q_i e_i - W_{i}(\Delta u^*_{i_{r}}) - \Bar{\xi}^2_{oi}   -2 \beta \tanh^{-1}(\Delta u^*_{i_{r}}/\beta)\xi_{i}. 
    \end{aligned}
\end{equation}
As for the $W_{i}(\Delta u^*_{i_{r}})$ in \eqref{equalivance eq3}, according to our previous result \cite[Theorem 1]{li2021model}, it follows that 
 \begin{equation} \label{equalivance eq4}
    \begin{aligned}
    W_{i}(\Delta u^*_{i_{r}}) = \beta^2 \sum_{j=1}^{m} \left(\tanh^{-1}(\Delta u^*_{i_{r}}/\beta)\right)^2 - \epsilon_{u{_i}},
    \end{aligned}
\end{equation}
where  $\epsilon_{u{_i}}  \leq  \frac{1}{2} \bar{g}^2_i \nabla {V_i^*}^{\top}\nabla V_i^{*}$. 
Given that there exists $b_{\nabla V_i^*} \in \mathbb{R}^+$ such that $\left\|\nabla V_i^{*}\right\| \leq b_{\nabla V_i^*}$. Thus, we could rewrite the bound of $\epsilon_{u{_i}}$ as $\epsilon_{u{_i}}  \leq b_{\epsilon_{ui}}  \leq \frac{1}{2} \bar{g}_i^2 b^2_{\nabla V_i^*}$.

Then, substituting \eqref{equalivance eq4} into \eqref{equalivance eq3}, we get
 \begin{equation} \label{equalivance eq5}
    \begin{aligned}
    \dot{V}^{*}_i = &-e^{\top}_i Q_i e_i  -[\beta \tanh^{-1}(\Delta u^*_{i_{r}}/\beta)+\xi_i]^2 \\
   & - (\Bar{\xi}^2_{oi}-\xi_i^{\top}\xi_i ) + b_{\epsilon_{ui}}.
    \end{aligned}
\end{equation}
We choose $\Bar{\xi}_{oi} =  \bar{c}_i \left\| \Delta u_{i_{r}} \right\|$, and $\bar{c}_i$ is picked to satisfy $\bar{c}_i \left\| \Delta u_{i_{r}} \right\| > \Bar{\xi}_i$, where $\Bar{\xi}_i$ is defined in Lemma 1. Then, the following equation holds
 \begin{equation} \label{equalivance eq6}
    \begin{aligned}
    \dot{V}^{*}_i \leq -e^{\top}_i Q_i e_i +  b_{\epsilon_{ui}}.
    \end{aligned}
\end{equation}
Thus, if $-\lambda_{\min}(Q_i) \left\| e_i \right\|^2 +  b_{\epsilon_{ui}} <0$, $\dot{V}^{*}_i <0$ holds.
Here $\lambda_{\min}(\cdot)$ denotes the minimum eigenvalues of a symmetric real matrix.
Finally, it concludes that states of the $i$th incremental error subsystem \eqref{eq error dyn 2 part new} converges to the residual set
\begin{equation} \label{equalivance eq9}
\Omega_{e_i} = \{e_i | \left\| e_i \right\| \leq \sqrt{b_{\epsilon_{ui}}/\lambda_{\min}(Q_i)} \}.
\end{equation}
This concludes the proof.
\end{proof}

\section{Proof of Theorem 3} \label{app stability proof}

\begin{proof}
Consider the candidate Lyapunov function for the $i$th incremental error subsystem \eqref{eq error dyn 2 part new} as 
    \begin{equation} \label{Lya function stability}
    L_i =  V^{*}_{i} + \frac{1}{2} \Tilde{W}^{\top}_{i} \Gamma_i^{-1} \Tilde{W}_{i}.
    \end{equation}
By denoting $L_{i_{1}} =V^{*}_{i}$, \emph{its derivative follows}
\begin{equation} \label{stability dLv 1}
\begin{aligned}
\dot{L}_{i_{1}} &=\nabla {V_i^*}^{\top}( \bar{f}_i + \bar{g}_i \Delta \hat{u}_{i_{r}} + \bar{g}_i \xi_{i})\\
                 & = \nabla {V_i^*}^{\top}( \bar{f}_i + \bar{g}_i \Delta u^*_{i_{r}}) +\nabla {V_i^*}^{\top} \bar{g}_i\xi_i \\
                 & \; \;\;\;+ \nabla {V_i^*}^{\top}\bar{g}_i(\Delta \hat{u}_{i_{r}}-\Delta u^*_{i_{r}}).
\end{aligned}
\end{equation}
Substituting \eqref{equalivance eq2} into \eqref{stability dLv 1} reads
\begin{equation} \label{stability dLv 2}
\begin{aligned}
\dot{L}_{i_{1}} 
     & = -e^{\top}_i Q_i e_i - \mathcal{W}(\Delta u^{*}_{i_{r}}) - \Bar{\xi}^2_{oi} - 2 \beta \tanh^{-1}\left(\Delta u^{*}_{i_{r}}/\beta\right) \xi_i \\
     & \; \;\;\; -2 \beta \tanh^{-1}\left(\Delta u^{*}_{i_{r}}/\beta\right)( \Delta \hat{u}_{i_{r}}-\Delta u^*_{i_{r}}).
\end{aligned}
\end{equation}
Combining with \eqref{equalivance eq4} and \eqref{equalivance eq5}, \eqref{stability dLv 2} follows
\small
\begin{equation} \label{stability dLv 3}
\begin{aligned}
    \dot{L}_{i_{1}} 
     \leq &-e^{\top}_i Q_i e_i - (\Bar{\xi}^2_{oi}-\left\|\xi_i\right\|^2 ) - \left[\beta \tanh^{-1}\left(\Delta u^{*}_{i_{r}}/\beta\right)+\xi_i\right]^2 \\
     & +\frac{1}{2} \nabla {V_i^*}^{\top}\bar{g}_i \bar{g}^{\top}_i\nabla V^{*}_{i} -2 \beta \tanh^{-1}\left(\Delta u^{*}_{i_{r}}/\beta\right)( \Delta \hat{u}_{i_{r}}-\Delta u^*_{i_{r}}).
\end{aligned}
\end{equation}
\normalsize

The term $-2 \beta \tanh^{-1}\left(\Delta u^{*}_{i_{r}}/\beta\right)( \Delta \hat{u}_{i_{r}}-\Delta u^*_{i_{r}})$ in \eqref{stability dLv 3} follows 
\small
\begin{equation} \label{stability dLv 4}
\begin{aligned}
-2 \beta \tanh^{-1}\left(\Delta u^{*}_{i_{r}}/\beta\right)( \Delta \hat{u}_{i_{r}}-\Delta u^*_{i_{r}})
\leq &\beta^2 \left\| \tanh^{-1}\left(\Delta u^{*}_{i_{r}}/\beta\right) \right\|^2  \\
&+ \left\| \Delta \hat{u}_{i_{r}}-\Delta u^*_{i_{r}} \right\|^2.
\end{aligned}
\end{equation}
\normalsize
According to \eqref{optimal incremental u} and \eqref{optimal V approximation}, and the mean-value theorem, the optimal incremental control is rewritten as
\begin{equation} \label{stability dLv 41}
\Delta u^*_{i_{r}} = -\beta\tanh(\frac{1}{2\beta}\bar{g}^{\top}_i\nabla \Phi^{\top}_{i}W^*_{i})-\epsilon_{\Delta u^*_{i}},
\end{equation}
where $\epsilon_{\Delta u^*_{i}} = \frac{1}{2} (1-\tanh^2(\eta_i))\bar{g}^{\top}_i\nabla \epsilon_i$, and $\eta_i \in \mathbb{R}$ is chosen between $\frac{1}{2\beta}\bar{g}^{\top}_i\nabla \Phi^{\top}_{i}W^*_{i}$ and $\frac{1}{2\beta}\bar{g}^{\top}_i\nabla V^*_{i}$. 
According to $\left\| \nabla\epsilon_i  \right\| \leq b_{\epsilon_{ei}}$ in Assumption \ref{bound of NN issues},  $\left\| \epsilon_{\Delta u^*_{i}} \right\| \leq \frac{1}{2} \left\|\bar{g}_i\right\|b_{\epsilon_{ei}}$ holds.
Then, by combining \eqref{optimal incremental hat u} with \eqref{stability dLv 41}, we get 
\begin{equation} \label{stability dLv 5}
\begin{aligned}
\Delta \hat{u}_{i_{r}}-\Delta u^*_{i_{r}} =  \beta(\tanh(\mathscr{G}^*_i)-\tanh(\hat{\mathscr{G}}_i))+\epsilon_{\Delta u^*_{i}}.
\end{aligned}
\end{equation}
where  $\mathscr{G}^*_i = \frac{1}{2\beta}\bar{g}^{\top}_i\nabla \Phi^{\top}_{i}W^*_{i}$, and $\hat{\mathscr{G}}_i = \frac{1}{2\beta}\bar{g}^{\top}_i\nabla \Phi^{\top}_{i}\hat{W}$.
Based on \eqref{optimal incremental u} and \eqref{optimal incremental hat u}, the Taylor series of $\tanh(\mathscr{G}^*_i)$ follows
\small
\begin{equation} \label{Taylor series}
\begin{aligned}
        \tanh(\mathscr{G}^*_i) &= \tanh(\hat{\mathscr{G}}_{i}) +\frac{\partial \tanh(\hat{\mathscr{G}}_{i})}{\partial\hat{\mathscr{G}}_i}(\mathscr{G}^*_i-\hat{\mathscr{G}}_i)+\mathcal{O}((\mathscr{G}^*_i-\hat{\mathscr{G}_i})^2)\\
        &=\tanh(\hat{\mathscr{G}}_{i})-\frac{1}{2 \beta}(1-\tanh^2(\hat{\mathscr{G}}_{i}))  \bar{g}^{\top}_i
        \nabla \Phi^{\top}_{i}\Tilde{W}_{i} \\
        & \; \; \; \; + \mathcal{O}((\mathscr{G}^*_i-\hat{\mathscr{G}}_i)^2),
\end{aligned}
\end{equation}
\normalsize
where $\mathcal{O}((\mathscr{G}^*_i-\hat{\mathscr{G}}_i)^2)$ is a higher order term of the Taylor series. By following \cite[Lemma 1]{yang2016online}, this higher order term is bounded as
    \begin{equation}\label{bound of higher order term}
        \begin{aligned}
            \left\| \mathcal{O}((\mathscr{G}^*_i-\hat{\mathscr{G}}_i)^2)  \right\| & \leq 2+\frac{1}{\beta} \left\|\bar{g}_i\right\| b_{\Phi_{ei}} \left\|\Tilde{W}_{i} \right\|.
        \end{aligned}
    \end{equation}
Based on \eqref{Taylor series}, we rewrite \eqref{stability dLv 5}  as 
\begin{equation}\label{u-u*-a}
\begin{aligned}
        \Delta \hat{u}_{i_{r}}-\Delta u^*_{i_{r}} &= \beta(\tanh(\mathscr{G}^*_i)-\tanh(\hat{\mathscr{G}}_{i}))+\epsilon_{\Delta u^*_{i}}\\
        &=-\frac{1}{2}(1-\tanh^2(\hat{\mathscr{G}}_{i}))  \bar{g}^{\top}_i \nabla \Phi^{\top}_{i}\Tilde{W}_{i}\\
        & \; \; \; \;+ \beta \mathcal{O}((\mathscr{G}^*_i-\hat{\mathscr{G}}_i)^2)+\epsilon_{\Delta u^*_{i}}.
        \end{aligned}
 \end{equation}
Then, by combining \eqref{bound of higher order term} with \eqref{u-u*-a}, and given that $\left\|1-\tanh^2(\hat{\mathscr{G}}_{i})\right\| \leq 2$, \emph{$\left\| \Delta \hat{u}_{i_{r}}-\Delta u^*_{i_{r}} \right\|^2$ in \eqref{stability dLv 4} follows}
\small
    \begin{equation}\label{u-u* abs 2}
        \begin{aligned}
            \left\| \Delta \hat{u}_{i_{r}}-\Delta u^*_{i_{r}} \right\|^2 & \leq 3 \beta^2 \left\| \mathcal{O}((\mathscr{G}^*_i-\hat{\mathscr{G}}_i)^2)\right\|^2 +3\left\| \epsilon_{\Delta u^*_{i}} \right\|^2 \\
             & \; \; \; \;+3\left\| -\frac{1}{2}(1-\tanh^2(\hat{\mathscr{G}}_{i})) \bar{g}^{\top}_i\nabla \Phi^{\top}_{i}\Tilde{W}_{i}\right\|^2 \\
             & \leq 6  \left\|\bar{g}_i\right\|^2 b^2_{\Phi_{ei}} \left\|\Tilde{W}_{i}\right\|^2+ 12\beta^2 + \frac{3}{4} \left\|\bar{g}_i\right\|^2b^2_{\epsilon_{ei}} \\
             & \; \; \; \;+ 12 \beta   \left\|\bar{g}_i\right\| b_{\Phi_{ei}} \left\|\Tilde{W}_{i}\right\|.
        \end{aligned}
\end{equation}
\normalsize
Based on \eqref{optimal V approximation}, \eqref{equalivance eq2},  Assumption \ref{bound of NN issues}, and the fact that $\left\|W^{*}_i\right\| \leq b_{W^{*}_{i}}$,
\emph{$\left\| \tanh^{-1}(\Delta u^{*}_{i_{r}} / \beta) \right\|^2$ in \eqref{stability dLv 4} follows}
\small
\begin{equation} \label{stability dLv 51}
\begin{aligned}
\left\| \tanh^{-1}\left(\Delta u^{*}_{i_{r}}/\beta\right) \right\|^2 &= \left\| \frac{1}{4\beta^2} \nabla {V_i^*}^{\top}\bar{g}_i \bar{g}^{\top}_i\nabla V^{*}_{i} \right\| \\
& \leq \frac{1}{4\beta^2} \left\|\bar{g}_i\right\|^2 b^2_{\Phi_{ei}} b^2_{W^{*}_{i}} + \frac{1}{4\beta^2} b^2_{\epsilon_{ei}}\left\|\bar{g}_i\right\|^2 \\
& \; \; \; \; +\frac{1}{2\beta^2}\left\|\bar{g}_i\right\|^2 b_{\Phi_{ei}} b_{\epsilon_{ei}} b_{W^{*}_{i}}.
\end{aligned}
\end{equation}
\normalsize
Using \eqref{u-u* abs 2} and \eqref{stability dLv 51}, \eqref{stability dLv 4} reads 
\begin{equation} \label{stability dLv 6}
\begin{aligned}
&-2 \beta \tanh^{-1}\left(\Delta u^{*}_{i_{r}}/\beta\right)( \Delta \hat{u}_{i_{r}}-\Delta u^*_{i_{r}})
 \leq  \frac{1}{4} \left\|\bar{g}_i\right\|^2 b^2_{\Phi_{ei}} b^2_{W^{*}_{i}} \\
 & + \frac{1}{4} b^2_{\epsilon_{ei}}\left\|\bar{g}_i\right\|^2+\frac{1}{2}\left\|\bar{g}_i\right\|^2 b_{\Phi_{ei}} b_{\epsilon_{ei}} b_{W^{*}_{i}}
+ 6  \left\|\bar{g}_i\right\|^2 b^2_{\Phi_{ei}} \left\|\Tilde{W}_{i}\right\|^2 \\
& + 12\beta^2 + \frac{3}{4} \left\|\bar{g}_i\right\|^2b^2_{\epsilon_{ei}}+ 12 \beta   \left\|\bar{g}_i\right\| b_{\Phi_{ei}} \left\|\Tilde{W}_{i}\right\|.
 \end{aligned}
\end{equation}
Substituting \eqref{stability dLv 6} into \eqref{stability dLv 3}, finally the first term $\dot{L}_{i_{1}}$ follows
\small
\begin{equation} \label{stability dLv 7}
\begin{aligned}
\dot{L}_{i_{1}} \leq & -e^{\top}_i Q_i e_i - (\Bar{\xi}^2_{oi}-\xi_i^{\top}\xi_i ) - \left[ \beta \tanh^{-1}\left(\Delta u^{*}_{i_{r}}/\beta\right)+\xi_i \right]^2 \\
     & + \frac{3}{4} \left\|\bar{g}_i\right\|^2 b^2_{\Phi_{ei}} b^2_{W^{*}_{i}} + \frac{3}{4} b^2_{\epsilon_{ei}}\left\|\bar{g}_i\right\|^2+\frac{3}{2}\left\|\bar{g}_i\right\|^2 b_{\Phi_{ei}} b_{\epsilon_{ei}} b_{W^{*}_{i}}\\
     & + 6  \left\|\bar{g}_i\right\|^2 b^2_{\Phi_{ei}} \left\|\Tilde{W}_{i}\right\|^2+ 12\beta^2 + \frac{3}{4} \left\|\bar{g}_i\right\|^2b^2_{\epsilon_{ei}}\\
             & + 12 \beta   \left\|\bar{g}_i\right\| b_{\Phi_{ei}} \left\|\Tilde{W}_{i}\right\|.
\end{aligned}
\end{equation}
\normalsize
\emph{As for the second term $\dot{L}_{W} = \frac{1}{2} \Tilde{W}^{\top}_{i} \Gamma_i^{-1} \Tilde{W}_{i}$}, based on \eqref{w update law} and Theorem 1 in our previous work \cite{li2021model}, it follows
\begin{equation} \label{stability dLw}
\dot{L}_{i_{2}} \leq - \Tilde{W}_{i}^{\top} \mathcal{Y}_i \Tilde{W}_{i} +\Tilde{W}_{i}^{\top}\epsilon_{\Tilde{W}_{i}}.
\end{equation}
where $\mathcal{Y}_i = \sum_{l=1}^{P_i} k_{e_{i}}Y_{i_{l}}Y^{\top}_{i_{l}} \in \mathbb{R}^{N_i \times N_i}$, and $\epsilon_{\Tilde{W}_{i}} =-k_{t_{i}} Y_i\epsilon_{h_{i}}-\sum_{l=1}^{P_i}k_{e_{i}} Y_{i_{l}}\epsilon_{h_{il}} \in \mathbb{R}^{N_i}$. The boundness of $Y_i$ and $\epsilon_{h_{i}}$ results in bounded $\epsilon_{\Tilde{W}_i}$. Thus, there exists $\Bar{\epsilon}_{\Tilde{W}_{i}} \in \mathbb{R}^{+}$ such that $\left\|\epsilon_{\Tilde{W}_{i}}\right\| \leq \Bar{\epsilon}_{\Tilde{W}_{i}}$. According to Assumption~\ref{rank condition}, $\mathcal{Y}_i$ is positive definite. Thus, \eqref{stability dLw} could be rewritten as
            \begin{equation} \label{dVcl 1}
            \begin{aligned}
                \dot{L}_{i_{2}}
                & \leq - \lambda_{\min}(\mathcal{Y}_i) \left\| \Tilde{W}_i \right\|^2- \Bar{\epsilon}_{\Tilde{W}_{i}}\left\| \Tilde{W}_i \right\|.
            \end{aligned}
        \end{equation}
\emph{Finally, as for $\dot{L}_i$}, substituting \eqref{stability dLv 7} and \eqref{stability dLw} into \eqref{Lya function stability}, we get 
\begin{equation} \label{stability final result}
\dot{L}_i \leq  -\mathcal{A}_i
-\mathcal{B}_i \left\|\Tilde{W}_{i}\right\|^2
+\mathcal{C}_i \left\|\Tilde{W}_{i}\right\| + \mathcal{D}_i,
\end{equation}
where $\mathcal{A}_i = e^{\top}_i Q_i e_i + (\Bar{\xi}^2_{oi}-\xi_i^{\top}\xi_i ) + \left[\beta \tanh^{-1}\left(\Delta u^{*}_{i_{r}}/\beta\right)+\xi_i\right]^2$,
$\mathcal{B}_i = \lambda_{\min}(\mathcal{Y}_i)- 6\left\|\bar{g}_i\right\|^2 b^2_{\Phi_{ei}}$,
$\mathcal{C}_i = 12 \beta   \left\|\bar{g}_i\right\| b_{\Phi_{ei}} +\Bar{\epsilon}_{\Tilde{W}_{i}}$,
and $\mathcal{D}_i = \frac{3}{4} \left\|\bar{g}_i\right\|^2 b^2_{\Phi_{ei}} b^2_{W^{*}_{i}} + \frac{3}{2} b^2_{\epsilon_{ei}}\left\|\bar{g}_i\right\|^2+ \frac{3}{2}\left\|\bar{g}_i\right\|^2 b_{\Phi_{ei}} b_{\epsilon_{ei}} b_{W^{*}_{i}}+ 12\beta^2$.
Let the parameters be chosen such that $\mathcal{B}_i > 0$. Since $\mathcal{A}_i$ is positive definite, the above Lyapunov derivative \eqref{stability final result} is negative if
    \begin{equation} \label{negative condition}
        \begin{aligned}
        \left\| \Tilde{W}_{i} \right\| > \frac{\mathcal{C}_i}{2\mathcal{B}_i}+\sqrt{\frac{\mathcal{C}_i^2}{4\mathcal{B}_i^2}+\frac{\mathcal{D}_i }{\mathcal{B}_i}}.
        \end{aligned}
    \end{equation}
Thus, the weight learning error of the critic agent converges to the residual set
\begin{equation} \label{compact set}
    \Tilde{\Omega}_{\Tilde{W}_{i}} = \left\{\Tilde{W}_{i} | \left\| \Tilde{W}_{i} \right\| \leq \frac{\mathcal{C}_i}{2\mathcal{B}_i}+\sqrt{\frac{\mathcal{C}_i^2}{4\mathcal{B}_i^2}+\frac{\mathcal{D}_i }{\mathcal{B}_i}} \right\}.
\end{equation}
This completes the proof.
\end{proof}

\section{Proof of Lemma 1} \label{app error bound}
\begin{proof}
Combing (6) with (7), the estimation error for the $i$th subsystem (8) follows 
\begin{equation} \label{error bound eq1}
\begin{aligned}
\xi_i & = h_i - \hat{h}_i = h_i - h_{i,0} \\
& = (\bar{g}^{-1}_i-g^{-1}_{i})\Delta \dot{x}_{i}+ (g^{-1}_{i,0}-g^{-1}_{i}) \dot{x}_{i,0}\\
&+g^{-1}(f_i - f_{i,0})+(g^{-1}_{i}-g^{-1}_{i,0})f_{i,0},
\end{aligned}
\end{equation}
where $\Delta \dot{x}_{i} = \dot{x}_{i}-\dot{x}_{i,0}$. Based on (2), (8) and (9), 
an equivalent form of $\Delta \dot{x}_{i}$ follows
\begin{equation} \label{error bound eq2}
\begin{aligned}
    \Delta \dot{x}_{i} & = f_i+g_i u_i - f_{i,0}-g_{i,0} u_{i,0} \\
    & = g_i \Delta u_i+ (g_i - g_{i,0}) u_{i,0} + f_i - f_{i,0} \\
    & = g_i (\Delta u_{i_{b}}+\Delta u_{i_{r}})+ (g_i - g_{i,0}) u_{i,0} + f_i - f_{i,0}.
    \end{aligned}
\end{equation}
Substituting  \eqref{error bound eq2} into \eqref{error bound eq1}, we get
\begin{equation} \label{error bound eq3}
    \begin{aligned}
       \xi_i & = (g_i\bar{g}^{-1}_i - 1) \Delta u_{i_{b}} + (g_i\bar{g}^{-1}_i - 1)\Delta u_{i_{r}} + \delta_{1i},
    \end{aligned}
\end{equation}
where $\delta_{1i} = \bar{g}^{-1}_i (g_i-g_{i,0})u_0+\bar{g}^{-1}_i(f_i - f_{i,0})$.

In the following, we implement the incremental base policy as (11) in favor of theoretical completeness.
For simplicity, denoting $\mu_i = \dot{x}_{d_{i}}-k_{i}e_{i} \in \mathbb{R}^{n_i}$. According to  (7) and (11),  $\Delta u_{i_{b}}$ in \eqref{error bound eq3} follows
\begin{equation} \label{error bound eq4}
    \begin{aligned} 
    \Delta u_{i_{b}} &= \bar{g}^{-1}_i(\mu_i - \bar{g}_i h_{i,0}-\bar{g}_i u_{i,0})\\
    & = \bar{g}^{-1}_i \mu_i- (\bar{g}^{-1}_{i} - g^{-1}_{i,0})\dot{x}_{i,0} - g^{-1}_{i,0} f_{i,0} - u_{i,0}\\
    & = \bar{g}^{-1}_i \mu_i-(\bar{g}^{-1}_{i} - g^{-1}_{i,0}) (f_{i,0}+g_{i,0}u_{i,0})- g^{-1}_{i,0} f_{i,0} - u_{i,0} \\
    & = \bar{g}^{-1}_i \mu_i - \bar{g}^{-1}_{i} (f_{i,0}+g_{i,0}u_{i,0}) \\
    & = \bar{g}^{-1}_i(\mu_i - \mu_{i,0})-\bar{g}^{-1}_i(\dot{x}_{i,0} -\mu_{i,0}),
    \end{aligned}
\end{equation}
where $\mu_{i,0} = \dot{x}_{d_{i,0}}-k_{i}e_{i,0}$.
Besides, combing (8) with  (9) , we get
\begin{equation} \label{error bound eq5}
    \begin{aligned} 
    \dot{x}_i &=  \dot{x}_{i,0} + \bar{g}_i(\Delta u_{i_{b}}+\Delta u_{i_{r}}) + \bar{g}_i \xi_i \\
    & = \dot{x}_{i,0} + \bar{g}_i \bar{g}^{-1}_i(\mu_i-\dot{x}_{i,0}) + \bar{g}_i \Delta u_{i_{r}} + \bar{g}_i \xi_i \\
    & =  \mu_i + \bar{g}_i  \Delta u_{i_{r}} + \bar{g}_i \xi_i.
    \end{aligned}
\end{equation}

Based on the result shown in \eqref{error bound eq5}, we get
\begin{equation} \label{error bound eq6}
    \begin{aligned} 
    \xi_i = \bar{g}^{-1}_i ( \dot{x}_i - \mu_i - \bar{g}_i  \Delta u_{i_{r}}).
    \end{aligned}
\end{equation}
 Accordingly, the following equation establishes
 \begin{equation} \label{error bound eq7}
    \begin{aligned} 
    \xi_{i,0} = \bar{g}^{-1}_i ( \dot{x}_{i,0}-\mu_{i,0}- \bar{g}_i  \Delta u_{i_{b,0}}).
    \end{aligned}
\end{equation}
Based on the result given in \eqref{error bound eq7}, \eqref{error bound eq4} is rewritten as
 \begin{equation} \label{error bound eq8}
    \begin{aligned}
    \Delta u_{i_{b}} &= \bar{g}^{-1}_i(\mu_i - \mu_{i,0}) - \bar{g}^{-1}_i(\dot{x}_{i,0} -\mu_{i,0} \\
     & - \bar{g}_i  \Delta u_{i_{b,0}}) -\Delta u_{i_{b,0}} \\
     & = \bar{g}^{-1}_i(\mu_i - \mu_{i,0}) - \xi_{i,0} - \Delta u_{i_{b,0}}.
    \end{aligned}
\end{equation}

Substituting \eqref{error bound eq8} into \eqref{error bound eq3} yields
 \begin{equation} \label{error bound eq9}
    \begin{aligned}
    \xi_i  = & (1-g_i\bar{g}^{-1}_i)\xi_{i,0}+(1-g_i\bar{g}^{-1}_i)\bar{g}^{-1}_i (\mu_{i,0}-\mu_i)\\
    & +(1-g_i\bar{g}^{-1}_i)(\Delta u_{i_{b,0}} -\Delta u_{i_{r}} ) + \delta_{1i}.
    \end{aligned}
\end{equation}

 In discrete-time domain, \eqref{error bound eq9} can be represented as
  \begin{equation} \label{error bound eq10}
    \begin{aligned}
    \xi_i(k)  = & (1-g_i (k) \bar{g}^{-1}_i)\xi_{i}(k-1)+(1-g_i (k) \bar{g}^{-1}_i) \Delta \tilde{u}_{i_{b}} \\
    & + \delta_{1i} + \delta_{2i},
    \end{aligned}
\end{equation}
where $\Delta \tilde{u}_{i_{r}} = \Delta u_{i_{r}} (k-1) -\Delta u_{i_{r}}(k)$, $\delta_{2i} = (1-g_i (k) \bar{g}^{-1}_i) \bar{g}^{-1}_i (\mu_i(k-1)-\mu_i(k))$.

The constrained input $\left\| \Delta u_{i_{r}}(k) \right\| \leq \beta $ implies that the following equation holds
 \begin{equation} \label{error bound eq14}
    \begin{aligned}
    \left\| \Delta \tilde{u}_{i_{r}} \right\|   \leq  \left\| \Delta u_{i_{r}} (k-1) \right\| + \left\| \Delta u_{i_{r}} (k)  \right\| \leq 2 \beta.
    \end{aligned}
\end{equation}
We choose the value of $\bar{g}_i$ to meet $\left\| 1-g_i (k) \bar{g}^{-1}_i \right\|   \leq \iota_i < 1$, where $\iota_i \in \mathbb{R}^+$. Under a sufficiently high sampling rate, it is reasonable to assume that there exists $\bar{\delta}_{1i}, \bar{\delta}_{2i} \in \mathbb{R}^+$ such that  $ \left\| \delta_{1i} \right\| \leq   \bar{\delta}_{1i}$ , and $\left\| \delta_{2i} \right\| \leq \iota_i \bar{\delta}_{2i}$. Then, the following equations hold:
 \begin{equation} \label{error bound eq15}
    \begin{aligned}
    \left\| \xi_i(k) \right\| & \leq \iota_i \left\| \xi_i(k-1) \right\| +   \iota_i \left\| \Delta \tilde{u}_{i_{r}} \right\| + \bar{\delta}_{1i} + \iota_i \bar{\delta}_{2i} \\
    & \leq \iota^{2}_i \left\| \xi_i(k-2) \right\| + (\iota^{2}_i +\iota_i )\left\| \Delta \tilde{u}_{i_{r}} \right\|  \\
    & \; \; \; \; + (\iota_i +1)(\bar{\delta}_{1i} + \iota_i \bar{\delta}_{2i})\\
    & \leq \cdots \\
    & \leq \iota^{k}_i \left\|\xi_i(0)  \right\| + \frac{\bar{\delta}_{1i} + \iota_i \bar{\delta}_{2i}}{1-\iota_i} + \frac{ \iota_i \left\| \Delta \tilde{u}_{i_{r}} \right\|}{1-\iota_i} \\
    & \leq \iota^{k}_i \left\|\xi_i(0)  \right\| + \frac{\bar{\delta}_{1i} + \iota_i \bar{\delta}_{2i}}{1-\iota_i} + \frac{2\iota_i \beta}{1-\iota_i} := \bar{\xi}_i.
    \end{aligned}
\end{equation}
    As $k \to \infty $, $ \bar{\xi}_i  \to \frac{\bar{\delta}_{1i} + \iota_i \bar{\delta}_{2i}}{1-\iota_i} + \frac{2\iota_i \beta}{1-\iota_i}$.
    This concludes the proof.
\end{proof}

\section{Settings for 7-DoF  Robot Manipulator} \label{sec kuka}
The parameter settings for the joint-space tracking task of the 7-DoF KUKA iiwa robot manipulator is presented in Table~\ref{tab parameter setting 7DoF joint-space}. The associated simulation results of the 7 subsystems are presented in Fig.~\ref{kuka traj of w and e sum}, which fully validate the superiority of our proposed DR-RL method.

\begin{table}[ht] 
  \begin{center}
    \caption{The parameter settings for KUKA iiwa.}
    \label{tab parameter setting 7DoF joint-space}
    \resizebox{0.5\textwidth}{!}{
    \begin{tabular}{c|c|c}
    \hline
       & \textbf{Incremental Base Policy (IBP)}  & \textbf{DR-RL} \\ \hline
      \multirow{1}{*}{\textbf{Initial value}} 
      & $q(0) = [0,0,0,0,0,0,0]^{\top}$, & $q(0) = [0,0,0,0,0,0,0]^{\top}$, \\
      \multirow{1}{*}{\textbf{conditions}} 
            & $\dot{q}(0) = [0,0,0,0,0,0,0]^{\top}$, & $\dot{q}(0) = [0,0,0,0,0,0,0]^{\top}$, \\
      \multirow{2}{*}{} 
            &$u(0) = [0,0,0,0,0,0,0]^{\top}$, & $u(0) = [0,0,0,0,0,0,0]^{\top}$, \\ 
      & &  $\hat{W}_i(0)= 0_{4 \times 1}$,$i = 1,2,\cdots,7$.\\ 
      \hline
      \multirow{6}{*}{\textbf{parameters}} 
      & $\Bar{g}_1 = 200$, $\Bar{g}_2 = 80$, $\Bar{g}_3 = 100$, $\Bar{g}_4 = 800$, 
      & $\Bar{g}_1 = 200$, $\Bar{g}_2 = 80$, $\Bar{g}_3 = 100$, $\Bar{g}_4 = 800$, \\
      & $\Bar{g}_5 = 800$, $\Bar{g}_6 = 800$, $\Bar{g}_7 = 800$, 
      & $\Bar{g}_5 = 800$, $\Bar{g}_6 = 800$, $\Bar{g}_7 = 800$, \\
      & $k_{1} = \diag{[1,1]}$, $k_{2} = \diag{[0.1,0.1]}$,
      & $k_{1} = \diag{[1,1]}$, $k_{2} = \diag{[0.1,0.1]}$, \\
      & $k_{3} = \diag{[1,1]}$, $k_{4} = \diag{[0.1,0.1]}$,
      & $k_{3} = \diag{[1,1]}$, $k_{4} = \diag{[0.1,0.1]}$, \\
      & $k_{5} = \diag{[1,1]}$, $k_{6} = \diag{[0.1,0.1]}$,
      & $k_{5} = \diag{[1,1]}$, $k_{6} = \diag{[0.1,0.1]}$, \\
      & $k_{7} = \diag{[1,1]}$,  
      & $k_{7} = \diag{[1,1]}$,   \\ 
     &$\Gamma_1 = \diag{[10,10,10,10]}$, $\Gamma_2 = \diag{[1,1,1,1]}$
     & $\Gamma_1 = \diag{[10,10,10,10]}$, $\Gamma_2 = \diag{[1,1,1,1]}$,\\
     &$\Gamma_3 = \diag{[10^3,10^3,10^3,10^3]}$, $\Gamma_4 = \diag{[1,1,1,1]}$,
     & $\Gamma_3 = \diag{[10^3,10^3,10^3,10^3]}$, $\Gamma_4 = \diag{[1,1,1,1]}$,\\
     &$\Gamma_5 = \diag{[10,10,10,10]}$, $\Gamma_6 = \diag{[10,10,10,10]}$,
     & $\Gamma_5 = \diag{[10,10,10,10]}$, $\Gamma_6 = \diag{[10,10,10,10]}$,\\
     &$\Gamma_7 = \diag{[10^7,10^7,10^7,10^7]}$,
     & $\Gamma_7 = \diag{[10^7,10^7,10^7,10^7]}$, \\
     &  & $Q_1 = \diag{[1,1]}$, $Q_2 = \diag{[2000,2000]}$, \\
     &  & $Q_3 = \diag{[10,10]}$, $Q_4 = \diag{[20,20]}$,\\
     &  & $Q_5 = \diag{[1,1]}$, $Q_6 = \diag{[20,20]}$,\\
     &  & $Q_7 = \diag{[1,1]}$, \\ 
      &  & $\beta = 1$,  $P_i = 10$, $\bar{c}_i = 2$, $k_{t_{i}} =1$, $k_{e_{i}} =1$, $i = 1,2,\cdots,7$. \\
      \hline
    \end{tabular}}
  \end{center}
\end{table}

 \begin{figure*}[!t]
    \centering
    \includegraphics[width=7.2in]{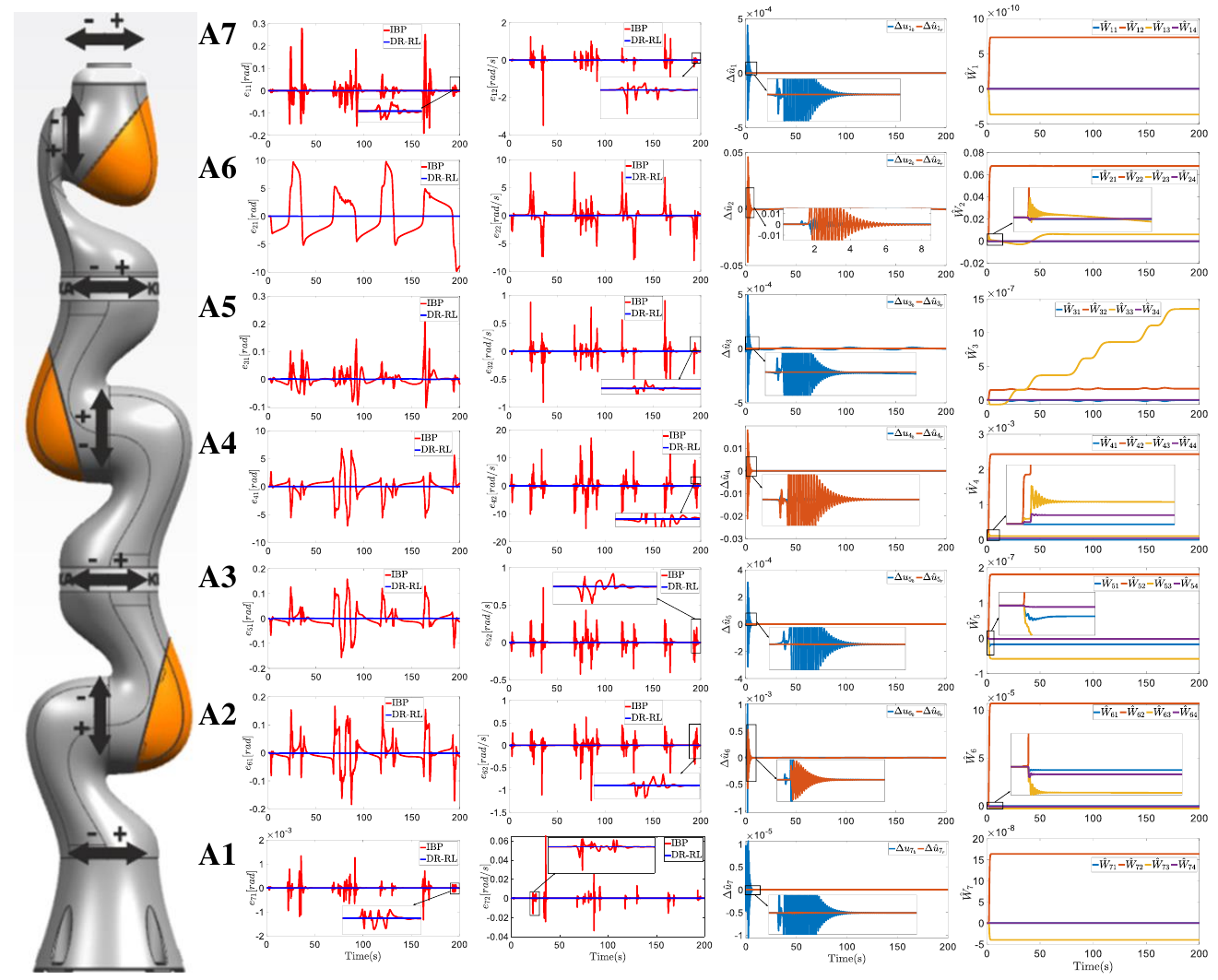}
    \caption{
    The numerical validation results on the 7-DoF KUKA iiwa robot manipulator.
    The simulation results of the subsystem 1 to the subsystem 7 are displayed from bottom to top.
    In each row, each subsystem's evolution trajectories of angle error $e_{i1}$, angle velocity error $e_{i2}$, incremental control inputs $\Delta u_{i_{b}}$ and $\Delta u_{i_{r}}$, and weight $\hat{W}_{i}$ are presented from left to right in sequence.}
    \label{kuka traj of w and e sum}
\end{figure*}

\section{Settings for 6-DoF  Quadrotor} \label{sim 6dof uav}
This section further certifies the effectiveness of DR-RL based tracking control policy under a high-dimensional quadrotor tracking task. 
The quadrotor is driven to track the desired spiral reference trajectory $x_r = [\frac{3}{10\sin(t)},\cos(t),\frac{t}{10\pi},0]^{\top} \in \mathbb{R}^3$, $t \in \left[0,50\right]$.
The associated parameter settings to conduct numerical simulations are referred to Table \ref{tab parameter setting uav}.
The detailed procedures to decouple the 6-DoF quadrotor into 6 subsystems (2) are referred to Section \ref{app quadrotor}.
For subsystem 1-6, we adopt the same activation functions as the ones used for the KUKA iiwa robot manipulator case. 
As displayed in Fig.~\ref{traj of e uav}, we obtain a satisfying tracking performance via the DR-RL based tracking control policy.
\begin{table}[ht]
  \begin{center}
    \caption{The parameter settings of a quadrotor task.}
    \label{tab parameter setting uav}
    \resizebox{0.48\textwidth}{!}{
    \begin{tabular}{c|c}
    \hline
      \multirow{1}{*}{\textbf{Initial value}} 
       &$\xi(0) = [0.1,1.1,0]^{\top}$, $\eta(0) = [0,0,0]^{\top}$,\\
      \multirow{1}{*}{\textbf{conditions}} 
       & $u(0) = [0,0,0.5]^{\top}$, \\ 
       \multirow{2}{*}{} 
      & $\bar{g}_i = 300$, $i = 1,2,3$; $\bar{g}_i = 60000$, $i = 4,5,6$, \\
      & $k_i = \diag{[3,3]}$, $\hat{W}_i(0)= 0_{4 \times 1}$, $i = 1, \cdots,6$. \\ \hline
      \multirow{1}{*}{\textbf{Cost function}} 
       & $Q_i = \diag{(1,1)}$, $\bar{c}_i = 4$,\\
       \multirow{1}{*}{\textbf{parameters}} 
       & $\beta = 0.1$, $i = 1, \cdots,6$. \\ \hline
      \multirow{1}{*}{\textbf{Weight learning}} 
       &$k_{t_{i}} = 1$, $k_{e_{i}} = 0.01$, $P_i = 6$,\\
       \multirow{1}{*}{\textbf{parameters}} 
      & $\Gamma_{i} = 0.01\diag{(I_{1 \times 4})}$, $i = 1, \cdots,6$. \\
      \hline
    \end{tabular}
    }
  \end{center}
\end{table}

\begin{figure}[!t]
    \centering
    \includegraphics[width=3.6in]{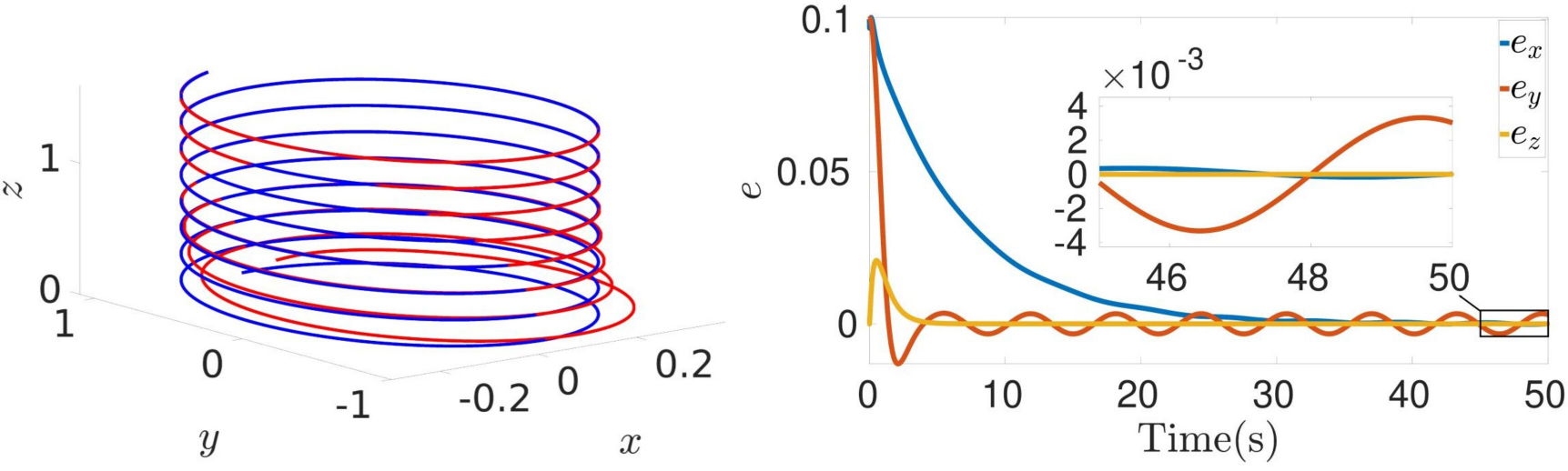}
    \caption{
   The numerical validations about a 6-DoF quadrotor.
   Left: the quadrotor trajectory in 3-D space;
   Right: the evolution trajectories of the position tracking error.
    } 
    \label{traj of e uav}
\end{figure}

\section{Settings for 2-DoF  Robot Manipulator}  \label{sim setting}
The  2-DoF robot manipulator's dynamics follows [18]
\begin{equation*} \label{2DoF model}
M(q) \Ddot{q}+C(q,\dot{q})\dot{q}+F_d \dot{q} + F_s  =\tau,
\end{equation*}
where  $q = [q_1, q_2]^{\top}, \dot{q} = [\dot{q}_1, \dot{q}_2]^{\top},  \tau \in \mathbb{R}^2$; 
Let  $c_{2} = \cos q_{2}$,  $s_{2} = \sin q_{2}$, then $\small{M(q) = \begin{bmatrix}
    a_{1}+2a_{3}c_2 & a_{2}+a_{3}c_2 \\
    a_{2}+a_{3}c_2 & a_{2} \\
\end{bmatrix}}\in \mathbb{R}^{2 \times 2}$, and $\small{C(q,\dot{q}) = \begin{bmatrix}
    -a_{3}\dot{q}_{2}s_2 & -a_{3}(\dot{q}_{1}+\dot{q}_{2})s_2\\
    a_{3}\dot{q}_{1} s_2 & 0\\
\end{bmatrix}}\in \mathbb{R}^{2 \times 2}$, 
wherein $a_1 = 3.473 \, \si{\kilo\gram \square\metre}$, $a_2 = 0.196 \, \si{\kilo\gram \square\metre}$, $a_3 = 0.242 \, \si{\kilo\gram \square\metre}$; 
The static friction follows $F_d = \diag[5.3, 1.1] \si{\newton \metre \second}$,
and the dynamic friction is $F_s = [8.45 \tanh(\dot{q}_1), 2.35 \tanh(\dot{q}_2)]^{\top} \si{\newton \metre \second}$.

\begin{table}[ht] 
  \begin{center}
    \caption{The parameter settings for the task-space task.}
    \label{tab parameter setting 2DoF task-space}
    \resizebox{0.48\textwidth}{!}{
   \begin{tabular}{c|c|c}
    \hline
       & \textbf{DF-RL} [17] & \textbf{DR-RL} \\ \hline
      \multirow{1}{*}{\textbf{Initial value}} 
      & $x(0) = [-0.18,\frac{\pi}{2},0,0]^{\top}$ & $x(0) = [-0.18,\frac{\pi}{2},0,0]^{\top}$ \\
      \multirow{1}{*}{\textbf{conditions}} 
      &$u(0) = [0,0]^{\top}$ & $u(0) = [0,0]^{\top}$  \\ 
      \multirow{2}{*}{} 
      &$\hat{W}(0)= 0_{10 \times 1}$ &  $\hat{W}_i(0)= 0_{4 \times 1}$ \\ 
      \multirow{1}{*}{} 
      & &$\bar{g}_1 = 3.8$, $\bar{g}_2 = 8.3$ \\ 
           \multirow{1}{*}{} 
      & &$k_i = \diag{[10,10]}$  \\  \hline
      \multirow{2}{*}{\textbf{Cost function}} 
      & $Q_1 = \diag{(10.4,23.4)}$ & $Q_1 = \diag{(21,47)}$ \\
      & $Q_2 = \diag{(15.6,18.2)}$ &$Q_2 = \diag{(30,40)}$ \\ 
      \multirow{1}{*}{\textbf{parameters}} 
      & $R = \diag{(1,1)}$ & $R_i = \diag{(1,1)}$ \\ 
      \multirow{1}{*}{}
      & &$\beta_i = 2$ $\bar{c}_i = 1$ \\\hline
      \multirow{1}{*}{\textbf{Weight learning}} 
      & $k_t = 0.9$, $k_e = 0.08$ & $k_{t_{i}} = 200$,  $k_{e_{i}} = 0.2$ \\
      \multirow{1}{*}{\textbf{parameters}} 
      &$\Gamma = 3.2\diag{(I_{1 \times 10})}$ & $\Gamma_{i} = 0.01\diag{(I_{1 \times 4})}$ \\
      &  $P = 15$ & $P_i = 12$ $i = 1,2$. \\
      \hline
    \end{tabular}
    }
  \end{center}
\end{table}

The  10-D activation function used in for task-space tracking task of the 2-DoF robot manipulator is :
\begin{equation*}
\begin{aligned}
\Phi_i(\sigma_i) =  [&\sigma^2_{i_{1}}, \sigma^2_{i_{2}}, \sigma^2_{i_{3}}, \sigma^2_{i_{4}}, \sigma_{i_{1}} \sigma_{i_{2}}, \\
&\sigma_{i_{1}} \sigma_{i_{3}}, \sigma_1 \sigma_{i_{4}}, \sigma_{i_{2}} \sigma_{i_{3}},\sigma_{i_{2}} \sigma_{i_{4}},
\sigma_{i_{3}} \sigma_{i_{4}}]^{\top},
\end{aligned}
\end{equation*}
where $\sigma_i = [e^{\top}_i(t),e^{\top}_i(t-t_s)]^{\top} \in \mathbb{R}^4$, 
$e_i = x_i-x_{d_i} \in \mathbb{R}^2$.

\section{Comparison between DR-RL and AS-RL Methods}  \label{sim dof 2}
This section focuses on a complex robotic manipulation task composed of trajectories represented as different trajectory dynamics to validate the enhanced task flexibility of our proposed DR-RL method. 
Note that the baseline RL-based tracking control strategy [18] developed under an augmented system  (referred to as AS-RL for simplicity)   would fail on the above task, as illustrated in Fig.~\ref{comparion of error sim}.
The considered task is to design a control input $\tau$ to enable the state $x = [q_1, q_2, \dot{q}_1, \dot{q}_2]^{\top}$ to perfectly follow the desired trajectory $ x_d  = [k_{p_{1}} \cos(t),k_{p_{2}} \cos(t),-k_{p_{1}} \sin(t), -k_{p_{2}} \sin(t)]^{\top}$,
where $k_{p_{1}} = 0.5 $, $k_{p_{2}} = 1$ for $t \in \left [0, \frac{61 \pi}{2}\right )$, and $k_{p_{1}} = 0.25$, $k_{p_{2}} = 0.5$ for $t \in \left [\frac{61 \pi}{2}, 400 \right]$. 
This adopted piecewise trajectory $x_d$ violates Assumption 2 in [18] that the desired trajectory is generated from one assumed reference trajectory dynamics. 

\begin{table}[ht] 
  \begin{center}
    \caption{The parameter settings for the joint-space task.}
    \label{tab parameter setting 2DoF joint-space}
    \resizebox{0.48\textwidth}{!}{
    \begin{tabular}{c|c|c}
    \hline
       & \textbf{AS-RL} [18] & \textbf{DR-RL} \\ \hline
      \multirow{1}{*}{\textbf{Initial value}} 
      & $x(0) = [0.5,1.1,0,0]^{\top}$, & $x(0) = [0.5,1.1,0,0]^{\top}$, \\
      \multirow{1}{*}{\textbf{conditions}} 
      &$u(0) = [0,0]^{\top}$, & $u(0) = [0,0]^{\top}$, $\bar{g}_i = 7$, \\ 
      \multirow{2}{*}{} 
      &$\hat{W}(0)= 0_{23 \times 1}$, &  $\hat{W}_i(0)= 0_{4 \times 1}$, \\ 
      &$x_d (0) = [0.5,1,0,0]^{\top}$. & $k_i = \diag{[10,10]}$. \\ \hline
      \multirow{1}{*}{\textbf{Cost function}} 
      & $Q = \diag{(600,600,1,1)}$, & $Q_i = \diag{(600,1)}$, \\ 
      \multirow{1}{*}{\textbf{parameters}} 
      & $R = \diag{(1,1)}$. & $\beta = 0.1$, $\bar{c}_i = 29$.   \\ \hline
      \multirow{1}{*}{\textbf{Weight learning}} 
      & $k_t = 80$, $k_e = 1$, & $k_{t_{i}} = 80$, $k_{e_{i}} = 1$, \\
      \multirow{1}{*}{\textbf{parameters}} 
      &$\Gamma = \diag{(I_{1 \times 23})}$. & $\Gamma_{i} = \diag{(I_{1 \times 4})}$, \\
      &  $P = 25$, & $P_i = 12$, $i = 1,2$. \\
      \hline
    \end{tabular}
    }
  \end{center}
\end{table}

 \begin{figure}[!t]
 \centering
\includegraphics[width=3.6in]{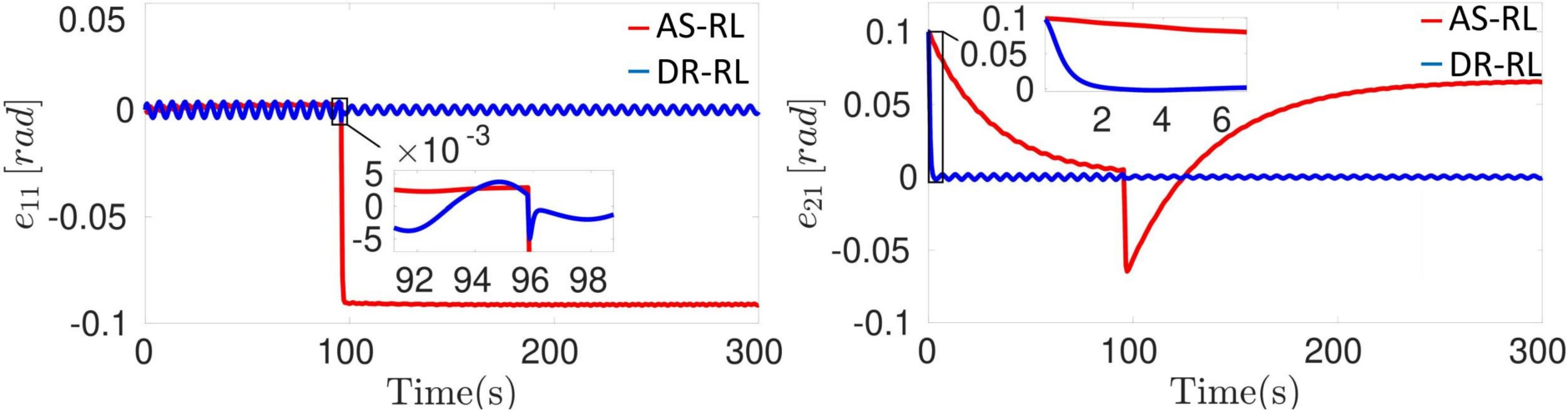}
    \label{joint 2 position error}
    \caption{The evolution trajectories of the $i$th subsystem's the tracking error $e_{i_{1}}$ of AS-RL and DR-RL methods, $i = 1,2$.}
    \label{comparion of error sim}
    \end{figure}
To achieve an accurate value function approximation, we implement the AS-RL method by using the following 23-D activation function provided in [18] but adopting our proposed weight update law (27) to achieve a fair comparison
\footnote{The weight update law proposed in [18] requires directly adding probing noise to control inputs to meet the required persistent excitation condition for the weight convergence, which causes undesirable oscillations.}.
\begin{equation*}\label{basise set 2DoF}
\begin{aligned}
\Phi(\rho) = \frac{1}{2} & [\rho^2_1,\rho^2_2,2\rho_1\rho_3,2\rho_1\rho_4,2\rho_2\rho_3,2\rho_2\rho_4,\rho^2_1\rho^2_2,\rho^2_1\rho^2_5,\\
&\rho^2_1\rho^2_6,\rho^2_1\rho^2_7,\rho^2_1\rho^2_8,\rho^2_2\rho^2_5,\rho^2_2\rho^2_6,\rho^2_2\rho^2_7,\rho^2_2\rho^2_8,\rho^2_3\rho^2_5,\\  
&\rho^2_3\rho^2_6,\rho^2_3\rho^2_7,\rho^2_3\rho^2_8,\rho^2_4\rho^2_5,\rho^2_4\rho^2_6,\rho^2_4\rho^2_7,\rho^2_4\rho^2_8]^{\top},
\end{aligned}
\end{equation*}
where $\rho = [e^{\top},x^{\top}_d]^{\top} \in \mathbb{R}^8$, $e = x-x_d \in \mathbb{R}^4$.
Our developed approach only needs the 4-D activation function same as the one used in Section \ref{sec kuka} for each subsystem.
The detailed parameter settings for AS-RL and DR-RL based tracking control schemes are provided in Table~\ref{tab parameter setting 2DoF joint-space}.

As displayed in Fig.~\ref{comparion of error sim}, both AS-RL and DR-RL based tracking control policies achieve a satisfying performance in the first interval $t \in \left [0, \frac{61 \pi}{2}\right )$. 
However, when the desired trajectory changes at $\frac{61 \pi}{2}$ $\si{\second}$, the learning inefficiency of the AS-RL based tracking control policy (with fine-tuned hyperparameters) hinders us from getting a high-precision tracking performance but with steady-state errors. Our proposed tracking control policy efficiently drives the robot manipulator to track the desired trajectories with 
performance competitive with the model-based AS-RL method, while enjoying better scalability to value function approximation and flexibility to varying desired trajectories. Furthermore, as shown in the right part of Fig.~\ref{comparion of error sim}, the error trajectory of DR-RL enjoys a faster convergence rate than the one of AS-RL, which is due to the guideline from the baseline policy.
Note that the oscillation of the DR-RL method in Fig.~\ref{comparion of error sim}  is the cost of keeping the parameters same as the AS-RL method. 
Even so, regarding the varying tasks, our proposed method under bad hyperparameters shows better performance than the well debugged AS-RL method.
Note that our proposed DR-RL would behave better in other parameter settings.

\section{The Incremental Subsystems  of  Quadrotor} \label{app quadrotor}
This section presents the detailed procedures to decouple the 6-DoF quadrotor into 6 incremental subsystems. 
By introducing pseudo control inputs, we transform the original underactuated quadrotor model into a fully-actuated model.
Thereby, our developed tracking control scheme can be applied to a quadrotor.

Let $\zeta = [x,y,z]^{\top} \in \mathbb{R}^3$, and $\eta = [\phi,\theta,\psi]^{\top} \in \mathbb{R}^3$ represent the absolute linear position and Euler angles defined in the inertial frame, respectively.
The E-L equation of a quadrotor follows 
\begin{subequations} \label{UAV EL}  
\begin{align}
& m \Ddot{\zeta} + m g I_{z} = R T_{B} \label{UAV EL position}\\
& J(\eta) \Ddot{\eta} + C(\eta, \dot{\eta})\dot{\eta} = \tau_{B}, \label{UAV EL angle}
\end{align}
\end{subequations}
where $m \in \mathbb{R}^+$ denotes the mass of the quadrotor; $g \in \mathbb{R}^+ $ is the gravity constant; $I_{z} = [0,0,1]^{\top}$ represents a column vector; $T_B = [0, 0, T]^{\top} \in \mathbb{R}^3$, where $T \in \mathbb{R}$ is the thrust in the direction of the body $z$-axis;
$\tau_{B} = [\tau_{\phi},\tau_{\theta},\tau_{\psi}]^{\top} \in \mathbb{R}^3$ denotes the torques in the direction of the corresponding body frame angles;
$R$, $J(\eta)$, $C(\eta, \dot{\eta})  \in \mathbb{R}^{3 \times 3}$ represent the rotation matrix, Jacobian matrix, and Coriolis term, respectively. 

Expanding the translational dynamics \eqref{UAV EL position} yields
\begin{equation}\label{expansion of first eq}
\begin{aligned}
&\Ddot{x} = \frac{1}{m} T (C_\psi S_\theta C_\phi + S_\psi S_\phi) \\
&\Ddot{y} = \frac{1}{m} T (S_\psi S_\theta C_\phi - C_\psi S_\phi) \\
&\Ddot{z} = -g + \frac{1}{m} T C_\theta C_\phi
\end{aligned}
\end{equation}
where $C_{(\cdot)}$ and $S_{(\cdot)}$ denote $\cos{(\cdot)}$ and $\sin{(\cdot)}$, respectively. 

By introducing pseudo controls $u_1 = T (C_\psi S_\theta C_\phi + S_\psi S_\phi)$, $u_2 =T (S_\psi S_\theta C_\phi - C_\psi S_\phi)$, and $u_3 = T C_\theta C_\phi$, and denoting $x_{11} =x$, $x_{12} =\dot{x}$, $x_{21} =y$, $x_{22} =\dot{y}$, $x_{31} =z$, $x_{32} =\dot{z}$,
we finally decouple the translational dynamics \eqref{UAV EL position} into the following three subsystems
\begin{subequations}\label{3 subsytems position}
\begin{align}
& \dot{x}_{11} = x_{12}, \, \, \dot{x}_{12} = \frac{1}{m} u_1 \label{subsys 1}\\ 
& \dot{x}_{21} = x_{22}, \, \, \dot{x}_{22} = \frac{1}{m} u_2 \label{subsys 2} \\ 
& \dot{x}_{31} = x_{32}, \, \,\dot{x}_{32} = -g+\frac{1}{m} u_3.  \label{subsys 3}
\end{align}
\end{subequations}

By following the same procedures (2)--(8) clarified in Section III, we get three subsystems for the rotational dynamics \eqref{UAV EL angle}:
\begin{subequations}\label{3 subsytems attitude}
\begin{align}
& \dot{x}_{41} = x_{42}, \, \, \dot{x}_{42} = -\frac{h_1}{J_{11}}+ \frac{1}{J_{11}} u_4 \label{subsys 4} \\ 
& \dot{x}_{51} = x_{52}, \, \, \dot{x}_{52} = -\frac{h_2}{J_{22}}+ \frac{1}{J_{22}}  u_5 \label{subsys 5} \\ 
& \dot{x}_{61} = x_{62}, \, \,\dot{x}_{32} = -\frac{h_3}{J_{33}}+ \frac{1}{J_{33}}  u_6, \label{subsys 6}
\end{align}
\end{subequations}
where $h_i =\sum_{j=1,j\ne i}^{3} J_{ij}\Ddot{\eta}_{j}+C_i \dot{\eta}_j \in \mathbb{R}$, $i = 1,2,3$;
$u_4 = \tau_\phi$, $u_5 = \tau_{\theta}$, and $u_6 =\tau_{\psi}$.

The aforementioned procedures \eqref{expansion of first eq}-\eqref{3 subsytems attitude} allow us to get 6 subsystems in the same form as (2). Then, we could adopt our developed DR-RL based tracking control policy, as clarified in Section IV, to drive the quadrotor \eqref{UAV EL} to track the predefined reference trajectory $x_r = [x_d,y_d,z_d, \psi_d]^{\top}$. 
Note that after the explicit values of pseudo controls $u_1$, $u_2$, and $u_3$ are gotten, the trust $T$, and reference angles $\phi_d$, $\theta_d$ are obtained as 
\begin{align}
& T = \sqrt{u^2_1 + u^2_2 + u^2_3} \\
& \phi_d = \arctan(\frac{u_1 S_\psi - u_2 C_\psi}{\sqrt{(u_1 C_\psi + u_2 S_\psi)^2+u^2_3}}), \, \, \, \phi_d \in (-\frac{\pi}{2},\frac{\pi}{2}) \\
& \theta_d =  \arctan (\frac{u_1 C_\psi + u_2 S_\psi}{u_3}), \, \, \,\theta_d \in (-\frac{\pi}{2},\frac{\pi}{2})
\end{align}

\end{document}